\newcommand*\patchAmsMathEnvironmentForLineno[1]{%
	\expandafter\let\csname old#1\expandafter\endcsname\csname #1\endcsname
	\expandafter\let\csname oldend#1\expandafter\endcsname\csname end#1\endcsname
	\renewenvironment{#1}%
	{\linenomath\csname old#1\endcsname}%
	{\csname oldend#1\endcsname\endlinenomath}}%
\newcommand*\patchBothAmsMathEnvironmentsForLineno[1]{%
	\patchAmsMathEnvironmentForLineno{#1}%
	\patchAmsMathEnvironmentForLineno{#1*}}%
\theoremstyle{definition}
\newtheorem{definition}{Definition}
\newtheorem{example}{Example}
\newtheorem{lemma}{Lemma}
\newtheorem{proposition}{Proposition}
\newtheorem{remark}{Remark}
\newtheorem{theorem}{Theorem}
\title{Stationary frequencies and mixing times for neutral drift processes with spatial structure}
\author{Alex McAvoy$^{1}$, Ben Adlam$^{1,2}$, Benjamin Allen$^{1,3}$, Martin A. Nowak$^{1,4,5}$}
\address{\small$^{1}$Program for Evolutionary Dynamics, Harvard University, Cambridge, MA 02138 \\ $^{2}$School of Engineering and Applied Sciences, Harvard University, Cambridge, MA 02138 \\$^{3}$Department of Mathematics, Emmanuel College, Boston, MA 02115 \\ $^{4}$Department of Mathematics, Harvard University, Cambridge, MA 02138 \\ $^{5}$Department of Organismic and Evolutionary Biology, Harvard University, Cambridge, MA 02138}
\begin{document}

\allowdisplaybreaks

\maketitle

\begin{abstract}
	We study a general setting of neutral evolution in which the population is of finite, constant size and can have spatial structure. Mutation leads to different genetic types (``traits"), which can be discrete or continuous. Under minimal assumptions, we show that the marginal trait distributions of the evolutionary process, which specify the probability that any given individual has a certain trait, all converge to the stationary distribution of the mutation process. In particular, the stationary frequencies of traits in the population are independent of its size, spatial structure, and evolutionary update rule, and these frequencies can be calculated by evaluating a simple stochastic process describing a population of size one (i.e. the mutation process itself). We conclude by analyzing mixing times, which characterize rates of convergence of the mutation process along the lineages, in terms of demographic variables of the evolutionary process.
\end{abstract}

\section{Introduction}
At the heart of evolutionary theory lies the question of how individual-level properties affect the long-run composition of a population. Microscopic quantities including mutation rates, selective differences between competing types, and number of interaction partners can have profound effects on the evolutionary success of genetic types (``traits") \citep{wright:G:1931,maruyama:TPB:1970,maruyama:GR:1970,maruyama:TPB:1974,nagylaki:JMB:1980,nagylaki:S:1992,lieberman:Nature:2005,ohtsuki:Nature:2006,tarnita:PNAS:2009,traulsen:PNAS:2009,loewe:PTRSB:2010,hauert:JTB:2012}, but the precise relationship between local and global properties is often difficult to quantify. Neutral drift, which involves no selective differences between the traits, provides a natural setting in which to study this relationship \citep{kimura:CUP:1983,lynch:E:1986,ochman:MBE:2003,nei:MBE:2005,bloom:BD:2007}. Neutral population dynamics are also important to the study of populations with selection because the latter can be viewed as a perturbation of the former provided selection is sufficiently weak \citep{antal:PNAS:2009,antal:JTB:2009,tarnita:JTB:2009,nowak:PTRSB:2009,tarnita:PNAS:2011,gokhale:JTB:2011,wu:PLoSCB:2013,wu:Games:2013,allen:JMB:2014,catalan:CNSNS:2015,zhang:SR:2016}. The purpose of this article is to study the long-term distribution of traits arising along the lineages of a neutral population, as a function of the population's size, structure, evolutionary update rule, and underlying mutation process.

To motivate the discussion, let us first consider a population of size $N$ of haploid organisms reproducing asexually. Each organism has a type, which might indicate a particular trait. We assume that all types are equivalent from a reproductive standpoint. Let $S$ be the set of all possible types, which we initially assume is finite (or at least discrete). In the Moran process \citep{moran:MPCPS:1958}, which is a simple example of a finite-population evolutionary process, an individual is first chosen to reproduce. The offspring then replaces another individual chosen for death. When an individual of type $s$ reproduces in the population, mutation is possible, and the offspring acquires type $s'$ with probability $M_{s,s'}\in\left[0,1\right]$. This procedure repeats ad infinitum, and the resulting process can be used as a model to study evolution's effects on the long-run composition of the population.

Mutations can be represented by a stochastic matrix, $M$, which gives rise to a Markov chain on the set of possible types, $S$. The resulting stochastic process is called ``mutation process" \citep[see][]{taylor:EJP:2007}. In the mutation process, a transition from $s$ to $s'$ occurs with probability $M_{s,s'}$. When evolution is neutral, there is a meaningful comparison between the evolutionary process and the mutation process since then neither one involves different types reproducing at different rates. In effect, the mutation process is just an evolutionary process in a population of size $N=1$. A natural question to ask is whether population size, structure, or update rule affect average trait frequency relative to the mutation process. That is, is the average frequency of type $s$ in a population of any size, $N>1$, different from that of a population of size $N=1$?

Suppose that there are $n$ possible types and that a mutation involves switching to a new type uniformly at random. Specifically, with probability $u$, a mutation to a random type occurs, and with probability $1-u$, the offspring acquires the parental type. This kind of mutation, which is common in evolutionary game theory, is symmetric in the sense that it acts on all competing types in the same way. With mutations of this form, all types are all equally abundant in the stationary distribution \citep{tarnita:JTB:2009,antal:JTB:2009}. Stated differently, in a neutral evolutionary process with symmetric mutations among $n$ types in a population of size $N$, the average frequency of each type ($1/n$) is the same as one would find in a population of size $N=1$. We extend this result to any mutation process, showing that the average frequency of a type in a population (actually, even along any lineage) of any size is the same as that in a population of size $N=1$. Consequently, dispersal patterns and population size have no effect on the average frequencies of the traits.

In some ways, this result is not surprising, and special cases have indeed been known for some time \citep{birky:PNAS:1988}. When there are no selective differences between the traits, there is evidently nothing in the population that drives one trait to a higher average frequency in a population than in the original mutation process. However, this reasoning does not constitute a proof. Given the increasing interest in using neutral frequency as a baseline measure to understand strategy selection, we provide a general proof here. An interesting consequence of the proof is that it allows one to quantify the rate of convergence of the trait distribution in terms of simple demographic variables of the process. This rate, which we characterize in terms of mixing times, exhibits much more interesting behavior than the limit itself (which is not affected by demographic components).

Our results apply to not only any population structure but also to any trait space, $S$. In other words, a player's trait could be an element of a finite set, a denumerable (countably infinite) set, or even an uncountably infinite set. Finite sets, even those consisting of just two types, have traditionally been the focus of many models in evolutionary game theory \citep{sigmund:PUP:2010}. Other frameworks, such as adaptive dynamics \citep{nowak:AAM:1990,geritz:EE:1998,dieckmann:Nature:1999,killingback:PRSB:1999,killingback:AN:2002,dieckmann:TPB:2006,imhof:PRSB:2009,doebeli:PUP:2011} and even some within evolutionary game theory \citep{nowak:TPB:1990,nowak:Nature:1993,roberts:Nature:1998,wahl:JTB:1999a,wahl:JTB:1999b,oechssler:ET:2001,vanveelen:ET:2008,cleveland:NA:2013,cheung:JET:2014,cheung:GEB:2016}, allow for continuous trait spaces. Accordingly, we make no assumption that there are only finitely many traits.

\section{Neutral evolution}
Before treating stationary trait frequencies in neutral evolution, we first turn to a generic way of describing neutral evolution itself. \citet{allen:JMB:2014} provide a method for dealing with evolution in finite populations of fixed size and structure. This general framework applies also to processes with selective differences between the types, but here we recall just the portion that applies to neutral processes. For simplicity, we introduce this framework in the context of haploid, asexually-reproducing populations, but we also discuss how it can be extended to non-haploid populations with sexual reproduction.

The atom of the mathematical framework we consider is a ``replacement event" \citep{allen:JMB:2014}. A replacement event is a pair, $\left(R,\alpha\right)$, where $R\subseteq\left\{1,\dots ,N\right\}$ is the set of individuals (or locations) to be replaced, and $\alpha :R\rightarrow\left\{1,\dots ,N\right\}$ is a map where $\alpha\left(i\right)$ indicates that individual $i$ is replaced by the offspring of individual $\alpha\left(i\right)$. A neutral evolutionary process may then be described by a probability distribution over all possible replacement events, $\left\{p_{\left(R,\alpha\right)}\right\}_{\left(R,\alpha\right)}$. The probabilities $\left\{p_{\left(R,\alpha\right)}\right\}_{\left(R,\alpha\right)}$ encode both the population structure and the demographic update rule (Fig.~\ref{fig:populationStructure}). This process is neutral because the distribution over replacement events is the same at each time step, regardless of the state \cite{allen:PLOSCB:2015}. The population structure and update rule are fixed, in that the replacement events at different points in time are chosen independently from the same distribution. While this framework describes a broad variety of neutral evolutionary processes, there is one further ``unity" assumption that must be made, which we recall in slightly modified form from \citep{allen:JMB:2014}:
\begin{enumerate}
	
	\item[(unity)] for each $i=1,\ldots,N$, there exists a sequence of replacement events, $\left\{\left(R_{k},\alpha_{k}\right)\right\}_{k=1}^{\ell}$, such that \textit{(i)} $p_{\left(R_{k},\alpha_{k}\right)}>0$ for each $k=1,\dots ,\ell$ and \textit{(ii)} for each $j$, we have $\tilde{\alpha}_{1}\circ\cdots\circ\tilde{\alpha}_{\ell}\left(j\right) =i$, where the mapping $\tilde{\alpha}_{k}:\left\{1,\dots ,N\right\}\rightarrow\left\{1, \dots ,N\right\}$ is defined by $\tilde{\alpha}_{k}\left(m\right) =\alpha_{k}\left(m\right)$ for $m\in R_{k}$ and $\tilde{\alpha}_{k}\left(m\right)=m$ for $m\not\in R_{k}$.
\end{enumerate}

\begin{figure}
	\centering
	\includegraphics[width=0.8\textwidth]{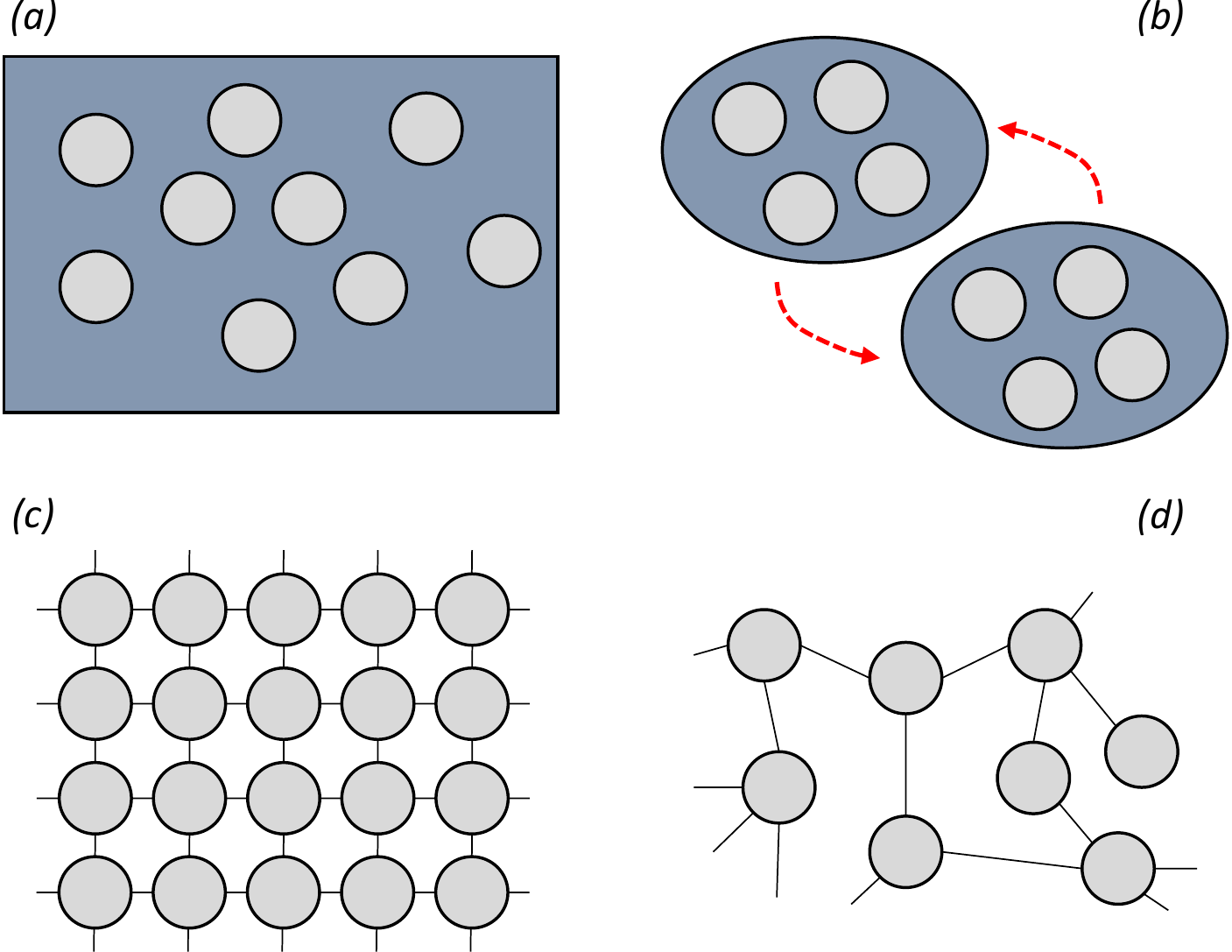}
	\caption{Four examples of evolving populations, ordered by increasing complexity of the spatial structure. In \textit{(a)}, the population is unstructured. \textit{(b)} depicts a subdivided (or ``deme-structured") population that consists of unstructured subpopulations (blue) with migration between them (red arrows). \textit{(c)} shows a regular grid (square lattice) in which all players have exactly four neighbors. \textit{(d)} illustrates a heterogeneous graph-structured population. Each individual resides on the vertex of a graph, and links indicate who is a neighbor of whom. The number of neighbors can vary from individual to individual, which results in structural asymmetries. In general, if the graph indicates an offspring-dispersal structure, then $p_{\left(R,\alpha\right)}>0$ only if $\alpha\left(i\right)$ is a neighbor of $i$ whenever $i\in R$.\label{fig:populationStructure}}
\end{figure}

This assumption formalizes the notion that the population evolves as a coherent unit, and that every individual can produce a lineage that takes over the entire population. Importantly, it does not imply that the population structure is trivial. Many interesting population structures, including heterogeneous graphs, sets, and subdivided populations with migration, satisfy the unity condition; we refer the reader to \citep{allen:JMB:2014} for further examples. In Example~\ref{ex:lineGraph} below, we give an example of a population that does not satisfy the unity condition.

At each point in time, every individual, $i=1,\ldots,N$, has a genetic type, $s_{i}\in S$. For the moment, we assume that $S$ is finite, but in the next section this assumption is relaxed. The current population state is given by the vector $\mathbf{s}=\left(s_{1},\dots ,s_{N}\right)\in S^{N}$. Let $M$ be the transition matrix for a mutation process on $S$. Transitions from a given population state $\mathbf{s} \in S^{N}$occur as follows: First a replacement event $\left(R,\alpha\right)$ is chosen from the distribution $\left\{p_{\left(R,\alpha\right)}\right\}_{\left(R,\alpha\right)}$. The probability that individual $i\in R$ is of type $s$ after this update is $M_{s_{\alpha\left(i\right)},s}$. If $i\not\in R$, the probability that individual $i$ is of type $s$ is $\delta_{s_{i},s}$ (that is, individuals who are not replaced retain their type). This process defines a Markov chain on $S^{N}$, which we denote by $\left\{\mathbf{S}\left(T\right)\right\}_{T=0}^{\infty}=\left\{\left(S_{1}\left(T\right) ,\dots ,S_{N}\left(T\right)\right)\right\}_{T=0}^{\infty}$. Note that we use $t$ to indicate time in the underlying mutation process, $\left\{X\left(t\right)\right\}_{t=0}^{\infty}$, and $T$ to denote time (i.e. number of update steps) in the evolutionary process, $\left\{\mathbf{S}\left(T\right)\right\}_{T=0}^{\infty}$; both $t$ and $T$ are discrete. The number of updates in the mutation process along a lineage is bounded from above by the number of update steps in the overall process, but in general these two measures of time need not necessarily coincide.

The distribution $\left\{p_{\left(R,\alpha\right)}\right\}_{\left(R,\alpha\right)}$ gives rise to several useful demographic variables \citep{allen:JMB:2014}, namely
\begin{align}\label{eq:demographic}
	e_{ij} \coloneqq \sum_{\substack{\left(R,\alpha\right) \\ j\in R,\ \alpha\left(j\right) =i}} p_{\left(R,\alpha\right)} ; \qquad b_{i} \coloneqq \sum_{j=1}^{N}e_{ij} ; \qquad d_{i} \coloneqq \sum_{j=1}^{N}e_{ji} .
\end{align}
$e_{ij}$ is the probability that $i$ transmits its offspring to $j$; $b_{i}$ is the birth rate of $i$; and $d_{i}$ is the death rate of $i$.

The way in which mutations are incorporated is essentially the same as they are in \citep{allen:JMB:2014}, except that here $M$ and $S$ can be arbitrary and mutations need not be symmetric with respect to the traits. 

\begin{theorem}\label{thm:ergodic}
	If the unity condition holds, then $\left\{\mathbf{S}\left(T\right)\right\}_{T=0}^{\infty}$ is ergodic whenever $\left\{X\left(t\right)\right\}_{t=0}^{\infty}$ is ergodic.
\end{theorem}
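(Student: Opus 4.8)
The plan is to show ergodicity of $\left\{\mathbf{S}\left(T\right)\right\}_{T=0}^{\infty}$ by establishing that it is irreducible and aperiodic as a Markov chain on the finite state space $S^{N}$ (recall $S$ is finite in this section). Since ergodicity of $\left\{X\left(t\right)\right\}_{t=0}^{\infty}$ means that $M$, as a Markov chain on $S$, is irreducible and aperiodic, I would extract from this two facts to be used repeatedly: \emph{(i)} for any $s,s'\in S$ there is some power $m$ with $\left(M^{m}\right)_{s,s'}>0$, and \emph{(ii)} there is a power $m_{0}$ such that $\left(M^{m_{0}}\right)_{s,s'}>0$ for \emph{all} $s,s'$ simultaneously (a standard consequence of irreducibility plus aperiodicity for finite chains; alternatively one can use that $M^{m}$ has a strictly positive entry in each row for all large $m$ and combine with irreducibility).

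For irreducibility of $\left\{\mathbf{S}\left(T\right)\right\}$, fix a source state $\mathbf{s}$ and a target state $\mathbf{s}^{\ast}$. The idea is to first use the unity condition to drive the population to a ``monomorphic-ish'' configuration descended from a single individual, and then use mutations along that lineage to install the desired types. Concretely, pick any $i$; by unity there is a finite sequence of replacement events $\left\{\left(R_{k},\alpha_{k}\right)\right\}_{k=1}^{\ell}$, each of positive probability, whose composition $\tilde{\alpha}_{1}\circ\cdots\circ\tilde{\alpha}_{\ell}$ sends every $j$ to $i$; performing these updates in order has positive probability, and—choosing the mutation outcomes so that each replaced individual copies its parent's type faithfully (which has positive probability since $M$, being a stochastic matrix of an irreducible chain, can be arranged to have the relevant no-change transitions realized, or more carefully, we simply do not constrain the intermediate types yet)—we reach some state in which every site traces its ancestry through this sequence back to site $i$. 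The cleaner route: run the unity sequence and \emph{along the way} prescribe the mutation at each replacement so that the ancestral ``content'' of site $i$ is propagated; after the last step every site's type is a function only of the initial type $s_i$ and the mutations applied, so we have reached a state $\mathbf{s}'$ that is ``generated from a single lineage.'' Then, to hit $\mathbf{s}^{\ast}$ exactly, append one more replacement event $\left(R,\alpha\right)$ with $R=\{1,\dots,N\}$ if such an event has positive probability—but since we cannot assume that, instead iterate: repeatedly apply unity-type sequences that rewrite sites one lineage at a time, and at each replacement of site $j$ choose the mutation outcome to be the type $s^{\ast}_{j}$, which has positive probability by fact \emph{(i)} applied to the parent's current type. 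Since unity guarantees each site \emph{can} be overwritten by a lineage from a prescribed ancestor, finitely many such passes, with mutations steering each newly-written site to its target value, reach $\mathbf{s}^{\ast}$ with positive probability. Hence the chain is irreducible.

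For aperiodicity, it suffices to exhibit one state $\mathbf{s}$ with $\Pr\!\left[\mathbf{S}\left(T+1\right)=\mathbf{s}\mid\mathbf{S}\left(T\right)=\mathbf{s}\right]>0$. Pick any single replacement event $\left(R,\alpha\right)$ with $p_{\left(R,\alpha\right)}>0$ (one exists, and in fact unity forces nonempty $R$'s to occur). Take $\mathbf{s}$ to be any state that is constant, $s_i=s$ for all $i$, where $s$ is a type with $M_{s,s}>0$; such an $s$ exists because an irreducible aperiodic finite chain must have some state with a positive self-loop after possibly passing to $M$ itself—more safely, use fact \emph{(ii)}: for the power $m_{0}$, $\left(M^{m_{0}}\right)_{s,s}>0$ for every $s$, and one can realize an $m_0$-fold composition of replacement events holding the configuration near-constant; or simplest, observe that if $M$ had zero diagonal one can still find, by aperiodicity, an $m$ with $\left(M^{m}\right)_{s,s}>0$ and build a self-returning loop of $m$ evolutionary steps. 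Choosing the event $\left(R,\alpha\right)$ and having every replaced $i\in R$ mutate back to $s$ (probability $M_{s,s}>0$ per site, using that $\alpha(i)$ also has type $s$) returns the state to $\mathbf{s}$; combined with irreducibility and finiteness of $S^N$, this yields ergodicity.

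The main obstacle is the irreducibility argument: one must be careful that the unity condition, which only asserts \emph{existence} of a positive-probability sequence realizing a particular ancestral map, can be leveraged to overwrite sites \emph{with the desired types} rather than merely with whatever the ancestral type happens to be. The key realization that resolves this is that a replacement event carries a mutation step, so each time site $i\in R$ is replaced by the offspring of $\alpha(i)$, the new type can be \emph{any} $s'$ with $M_{s_{\alpha(i)},s'}>0$, and by irreducibility+aperiodicity of $M$ (fact \emph{(i)}, or better the uniform-power fact \emph{(ii)}) one can, over finitely many successive replacements of the same site, steer it to any prescribed target; interleaving these steering passes over all $N$ sites—ordering them so that once a site is set to its target $s^{\ast}_{j}$ it is not replaced again—produces a positive-probability path from $\mathbf{s}$ to $\mathbf{s}^{\ast}$. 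Checking that the ordering can always be achieved is exactly where unity is used, and this bookkeeping is the one genuinely nontrivial part; everything else is routine.
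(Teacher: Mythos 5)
Your plan founders on the mechanism by which powers of $M$ get realized in the population process. When site $j$ is replaced in a replacement event $\left(R,\alpha\right)$, its new type is a \emph{single} draw from $M_{s_{\alpha\left(j\right)},\cdot}$; your ``fact \emph{(i)}'' (positivity of some power $\left(M^{m}\right)_{s,s'}$) does not make the one-step entry $M_{s_{\alpha\left(j\right)},s^{\ast}_{j}}$ positive, so the claim that each replacement of $j$ can be steered directly to $s^{\ast}_{j}$ is wrong. The attempted repair---``finitely many successive replacements of the same site''---does not work either: overwriting site $j$ repeatedly from its parent produces repeated independent one-step draws from the parent's (unchanged) type; it does not compose to $M^{m}$. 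Multi-step transitions of the mutation chain accrue only along a \emph{lineage}, i.e.\ a chain of birth events $i_{0},i_{1},\dots,i_{m}$ with $\alpha_{k}\left(i_{k-1}\right)=i_{k}$ at consecutive updates. This is precisely the device the paper's proof builds: using the unity condition it exhibits $m_{0}$ consecutive positive-probability replacement events forming such a chain, so that the terminal individual's type is governed by the strictly positive matrix $M^{m_{0}}$, and only then propagates that individual's line to the rest of the population; aperiodicity is handled by the same construction with the mutation path returning to the initial types. Your proposal never constructs these birth chains, and without them the irreducibility argument does not go through.

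The second gap is the ``ordering'' you defer to unity: you want to rewrite sites one at a time and arrange that ``once a site is set to its target $s^{\ast}_{j}$ it is not replaced again.'' The unity condition gives no such control---it only asserts the \emph{existence} of positive-probability sequences realizing a particular ancestry map---and the freezing claim is false in general: under Wright--Fisher updating (one of the paper's own examples) every replacement event has $R=\left\{1,\dots,N\right\}$, so every site is replaced at every step and no site can be shielded after being set. So the ``bookkeeping'' you identify as the nontrivial step is not merely unfinished; as stated it cannot be carried out. (A smaller point: your parenthetical that an irreducible aperiodic finite chain must have a state with a positive self-loop is also false---consider a three-state chain with transitions $1\rightarrow 2\rightarrow 3\rightarrow 1$ and $1\rightarrow 3$---though you partly hedged this; but the fix again requires realizing $\left(M^{m}\right)_{s,s}>0$ along an actual lineage of $m$ births, i.e.\ the same construction you are missing.)
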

\begin{proof}
	Consider two states, $\mathbf{s},\mathbf{s}'\in S^{N}$. If $\left\{X\left(t\right)\right\}_{t=0}^{\infty}$ is ergodic, then there exists $m_{0}$ such that $\left(M^{m}\right)_{r,s}>0$ whenever $r,s\in S$ and $m\geqslant m_{0}$. If $m_{0}=1$, then $T$ is positive. Otherwise, by the unity condition, we can find an ordered sequence, $\left(R_{1},\alpha_{1}\right) ,\dots ,\left(R_{m_{0}},\alpha_{m_{0}}\right)$ with $p_{\left(R_{k},\alpha_{k}\right)}>0$ for each $k$, together with a collection $\left(i_{0},i_{1},\dots ,i_{m_{0}}\right)$ such that $\alpha_{k}\left(i_{k-1}\right) =i_{k}$ for each $k=1,\dots ,m_{0}$. After starting in state $\mathbf{s}$, and given this sequence of replacement events, the probability that player $i_{0}$ has type $s$ is $M_{s_{i_{0}},s}^{m_{0}}>0$. By the unity condition, $i_{0}$ can propagate its offspring to all other nodes in a finite number of steps; together with the fact that $M^{m}$ is positive whenever $m\geqslant m_{0}$, we see that there is a positive probability of reaching state $\mathbf{s}'$. Thus, $\left\{\mathbf{S}\left(T\right)\right\}_{T=0}^{\infty}$ is irreducible. Aperiodicity of $\left\{\mathbf{S}\left(T\right)\right\}_{T=0}^{\infty}$ follows from essentially the same argument because, after a similarly chosen sequence of replacements, the probability of staying in the state in which the chain started is positive.
\end{proof}

If the unity condition is not satisfied, then Theorem~\ref{thm:ergodic} need not necessarily hold, as the following example illustrates:
\begin{example}[Line graph]\label{ex:lineGraph}
	As an example of a process that does not satisfy the unity condition, consider a population arranged on a line. If reproduction and replacement flow in only one direction, then player ``$1$," i.e. the player at the ``beginning" of the line, is never replaced. Therefore, for each $k\in S$, there exists a stationary distribution, $\mu^{\left(k\right)}$, for $\left\{\mathbf{S}\left(T\right)\right\}_{T=0}^{\infty}$ with $\mu^{\left(k\right)}\left(s_{1}=k\right) =1$. In particular, there is more than one stationary distribution even when $\left\{X\left(t\right)\right\}_{t=0}^{\infty}$ is ergodic, so Theorem~\ref{thm:ergodic} does not hold in this case.
\end{example}

When the unity condition holds and $\left\{X\left(t\right)\right\}_{t=0}^{\infty}$ is ergodic, Theorem~\ref{thm:ergodic} implies that $\left\{\mathbf{S}\left(T\right)\right\}_{T=0}^{\infty}$ has a unique stationary distribution, $\mu$. The next section establishes the simple fact that, if $\pi$ is the stationary distribution of the mutation process, $M$, then $\mu\left(s_{i}=k\right) =\pi_{k}$ for every $k\in S$. When $\left\{\mathbf{S}\left(T\right)\right\}_{T=0}^{\infty}$ has more than one stationary distribution, it need not be the case that every such distribution satisfies $\mu\left(s_{i}=k\right) =\pi_{k}$ for every $k\in S$, which we illustrate using an irreducible mutation chain that is not ergodic:
\begin{example}[Mutation process is irreducible but not ergodic]\label{ex:irreducibleNotErgodic}
	Suppose that $S=\left\{1,2\right\}$ and let $M$ be the matrix
	\begin{align}
		M &\coloneqq 
		\begin{pmatrix}
			0 & 1 \\
			1 & 0
		\end{pmatrix} .
	\end{align}
	$\left\{X\left(t\right)\right\}_{t=0}^{\infty}$ is irreducible but periodic, with period $2$, and its (unique) stationary distribution is $\left(1/2,1/2\right)$. Consider a birth-death process on a star graph. Each player is chosen uniformly-at-random to reproduce. If a peripheral player reproduces, their offspring is subjected to the mutation operator and propagated to the central node. If the player at the central node reproduces, then this player propagates an offspring to every peripheral location (and, again, each offspring is subjected to the mutation operator). It is easy to see that the state with type $1$ at the central node and type $2$ at all of the peripheral nodes is stationary (see Fig.~\ref{fig:stars}). For this stationary distribution, $\frac{1}{N}\sum_{i=1}^{N}\mu\left(s_{i}=1\right) =1/N$, which is not equal to $\pi_{1}=1/2$ when $N>2$.
\end{example}

\begin{figure}
	\centering
	\includegraphics[width=0.8\textwidth]{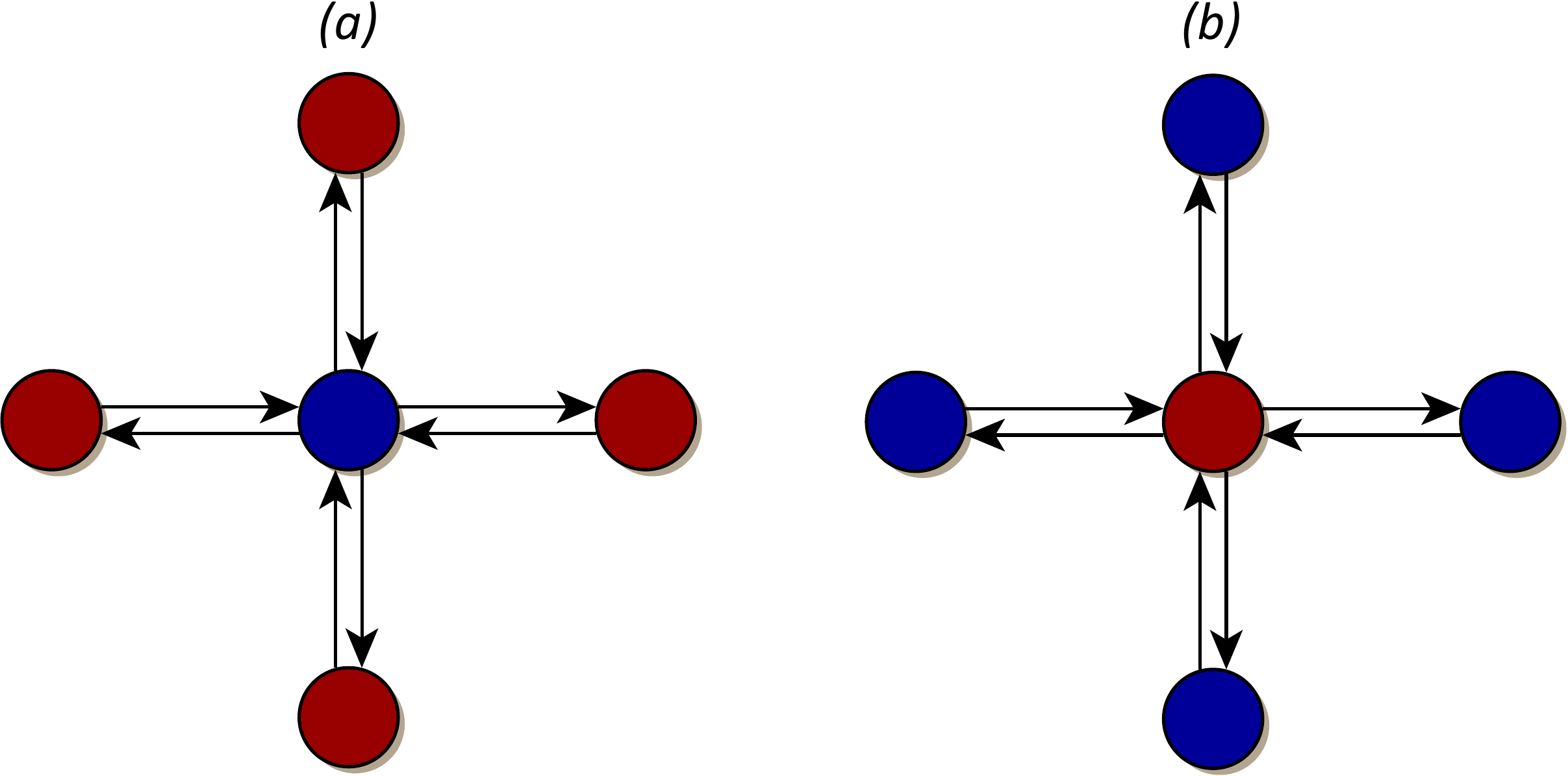}
	\caption{A birth-death process on a star graph of size $N=5$. When a player reproduces, the offspring has the opposite type of the parent (i.e. if the parent is blue, the child is red; if the parent is red, the child is blue). When a peripheral individual reproduces, the central individual dies and is replaced by the offspring. If the central individual reproduces, then all four peripheral players die and are replaced by offspring of the central player. \textit{(a)} and \textit{(b)} both give stationary states, which are unaffected by the evolutionary process. The frequency of blue is $1/5$ in \textit{(a)} and $4/5$ in \textit{(b)}, while the frequency of blue in the stationary distribution of the mutation process (i.e. in a population of size $N=1$) is $1/2$. Therefore, even though $\left\{X\left(t\right)\right\}_{t=0}^{\infty}$ is irreducible, the evolutionary process has multiple stationary distributions due to the periodicity of $M$, and these stationary distributions need not all exhibit the same trait frequencies as the original mutation process. However, there does exist a stationary distribution for the birth-death process with trait frequency $1/2$ for each of blue and red, namely the distribution that assigns probability $1/2$ to state \textit{(a)} and $1/2$ to state \textit{(b)}.\label{fig:stars}}
\end{figure}

The following result is the main tool we need to show that population size and structure do not influence stationary trait frequencies. The essence of this result is that for any $t\geqslant 0$, if one looks sufficiently far into the future, then every individual in the population will have a lineage of length at least $t$:
\begin{lemma}\label{lem:lineageLength}
	Let $B_{i}^{T}$ be the number of birth events after $T$ updates in the lineage leading to individual $i$ (a random variable). Then, for any $i=1,\dots ,N$ and any $t\geqslant 0$, we have $\lim_{T\rightarrow\infty} \mathbf{P}\left[B_{i}^{T}\geqslant t\right] =1$.
\end{lemma}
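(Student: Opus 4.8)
The plan is to read the lineage leading to $i$ \emph{backwards} in time and to show that each backward step is a birth event with probability at least a fixed constant $\delta>0$, so that $B_{i}^{T}$ stochastically dominates a $\mathrm{Binomial}\left(T,\delta\right)$ random variable; the conclusion then follows by letting $T\to\infty$ with $t$ fixed.

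First I would set up the backward lineage. Fix $i$ and $T$ and let $\left(R_{1},\alpha_{1}\right),\dots,\left(R_{T},\alpha_{T}\right)$ be the replacement events chosen (independently, from $\left\{p_{\left(R,\alpha\right)}\right\}$) in the first $T$ updates. Define ancestral locations by $a_{T}\coloneqq i$ and $a_{k-1}\coloneqq\tilde{\alpha}_{k}\left(a_{k}\right)$ for $k=T,T-1,\dots,1$; thus $a_{k}$ is the location, at time $k$, of the ancestor of the individual occupying location $i$ at time $T$, and a birth event occurs in the lineage at step $k$ exactly when $a_{k}\in R_{k}$. Hence $B_{i}^{T}=\#\left\{k\in\left\{1,\dots,T\right\}:a_{k}\in R_{k}\right\}$.

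The first substantive point is to extract a uniform lower bound on death rates from the unity condition. By \eqref{eq:demographic}, $d_{a}=\sum_{\left(R,\alpha\right):\,a\in R}p_{\left(R,\alpha\right)}$ is exactly the probability that location $a$ is replaced in a single update. If $d_{a}=0$ for some $a$, then $a\notin R$ for every replacement event with positive probability, so $a$ is a fixed point of $\tilde{\alpha}$ for each such event; but applying the unity condition with an index $i'\neq a$ (possible since $N\geqslant 2$) yields a sequence of positive-probability replacement events whose composition $\tilde{\alpha}_{1}\circ\cdots\circ\tilde{\alpha}_{\ell}$ sends $a$ to $i'\neq a$ --- a contradiction. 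Therefore $\delta\coloneqq\min_{1\leqslant a\leqslant N}d_{a}>0$.

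The key observation is then that $a_{k}$ depends only on $\left(R_{k+1},\alpha_{k+1}\right),\dots,\left(R_{T},\alpha_{T}\right)$, hence is independent of $\left(R_{k},\alpha_{k}\right)$. Consequently, conditioning on the replacement events of index greater than $k$ (which determine $a_{k}$), step $k$ is a birth event with probability $d_{a_{k}}\geqslant\delta$. Revealing the replacement events in the order $k=T,T-1,\dots,1$, this says the indicators $\mathbf{1}\left[a_{k}\in R_{k}\right]$ form a sequence whose conditional probability of equalling $1$, given the earlier-revealed ones, is at least $\delta$; a standard thinning/coupling argument (replace each such indicator by an independent $\mathrm{Bernoulli}\left(\delta\right)$ lying below it) then gives that $B_{i}^{T}$ stochastically dominates $\mathrm{Binomial}\left(T,\delta\right)$. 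Hence
\begin{align*}
	\mathbf{P}\left[B_{i}^{T}\geqslant t\right]\;\geqslant\;\mathbf{P}\left[\mathrm{Binomial}\left(T,\delta\right)\geqslant t\right]\;\longrightarrow\;1\qquad\left(T\to\infty\right)
\end{align*}
for every fixed $t\geqslant 0$, which is the claim. I expect the only delicate part to be the conditional-independence bookkeeping for the time-reversed lineage --- specifically the fact that $a_{k}$ is a function of the \emph{strictly later} replacement events, which is what legitimizes the comparison with genuinely independent Bernoulli trials; the remaining steps are routine.
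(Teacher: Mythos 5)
Your proof is correct and takes essentially the same route as the paper's: both arguments rest on the observation that, conditioning on the later replacement events that determine the backward lineage's location, each update is a birth on the lineage with probability at least $d_{\ast}=\min_{a}d_{a}$, and then compare $B_{i}^{T}$ with a $\mathrm{Binomial}\left(T,d_{\ast}\right)$ count (the paper via the pointwise bound $\mathbf{P}\left[B_{i}^{T}=t\right]\leqslant\binom{T}{t}\left(1-d_{\ast}\right)^{T-t}\left(d^{\ast}\right)^{t}$, you via stochastic domination). Your one addition is welcome: you actually derive $d_{\ast}>0$ from the unity condition (for $N\geqslant 2$), whereas the paper only concludes ``provided $d_{\ast}>0$.''
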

\begin{proof}
	The probability that $i$ is replaced in any given update step (i.e. the death rate of $i$) is $d_{i}$ (Eq.~\ref{eq:demographic}), which is the same at every update since the process is neutral. Let $d_{\ast}\coloneqq\min_{1\leqslant i\leqslant N}d_{i}$ and $d^{\ast}\coloneqq\max_{1\leqslant i\leqslant N}d_{i}$. Since an individual is replaced with probability at most $d^{\ast}$ and not replaced with probability at most $1-d_{\ast}$ in each step, the probability that there are exactly $t$ birth events by time $T$ on the lineage leading to $i$ satisfies
	\begin{align}
		\mathbf{P}\left[B_{i}^{T}=t\right] &\leqslant \binom{T}{t}\left(1-d_{\ast}\right)^{T-t}\left(d^{\ast}\right)^{t} .
	\end{align}
	Therefore, provided $d_{\ast}>0$, we see that $\lim_{T\rightarrow\infty} \mathbf{P}\left[B_{i}^{T}\geqslant t\right] =1$ for any $t\geqslant 0$.
\end{proof}

In the next section, we use this simple fact to derive the long-run trait frequencies.

\section{Marginal distributions and stationary frequencies}
Although the assumption that $S$ is finite is reasonable in many cases, there are also scenarios in which one would like to consider continuous trait spaces. To capture a general notion of a trait, we let $S$ be a measurable space, which contains as special cases finite, denumerable, and uncountably infinite trait spaces.

In what follows, we denote by $\mathbf{P}_{\nu}$ and $\mathbf{E}_{\nu}$ the distribution and expectation, respectively, of a random variable that depends on another distribution, $\nu$. Such is the case when $\nu$ is the initial distribution of a Markov chain and the random variable under consideration is the time-$t$ state of the chain. An element $s\in S$ in place of $\nu$ (e.g. $\mathbf{P}_{s}$ or $\mathbf{E}_{s}$) indicates that the starting point of the Markov chain is at $s$.

When dealing with a general trait space, we can no longer necessarily represent a mutation chain by a transition matrix. Instead, such a mutation process is described by a transition kernel. Let $\mathcal{F}\left(S\right)$ be a $\sigma$-algebra of subsets on $S$ and denote by $\Delta\left(S\right)$ the space of probability measures on $S$. A mutation process on $S$, $\left\{X\left(t\right)\right\}_{t=0}^{\infty}$, is then defined by a Markov kernel, $\kappa :S\rightarrow\Delta\left(S\right)$, where for $s\in S$ and $E\in\mathcal{F}\left(S\right)$, $\kappa\left(s,E\right)$ is the probability that the chain is in $E$ after being in state $s$. If $S$ is finite and the transition matrix for the mutation process is $M$, then $M_{s,s'}=\kappa\left(s,\left\{s'\right\}\right)$ for each $s,s'\in S$. To extend the notion of ergodicity to a Markov chain on a general state space, one needs the notion of a Harris chain, which we recall from \citep{durrett:CUP:2009}:
\begin{definition}
	A Markov chain, $\left\{X\left(t\right)\right\}_{t=0}^{\infty}$, on $S$ with kernel $\kappa$ is a Harris chain if there exist $A,B\in\mathcal{F}\left(S\right)$, $\varepsilon >0$, a function $q:A\times B\rightarrow\mathbb{R}$ with $q\left(s,s'\right)\geqslant\varepsilon$ for every $s\in A$ and $s'\in B$, and $\rho\in\Delta\left(B\right)$ such that
	\begin{enumerate}
		
		\item[\textit{(i)}] $\mathbf{P}_{s}\left[\tau_{A}<\infty\right] >0$ for every $s\in S$, where $\tau_{A}=\inf\left\{t\in\left\{0,1,2,\dots\right\}\ |\ X\left(t\right)\in A\right\}$;
		
		\item[\textit{(ii)}] $\kappa\left(s,C\right)\geqslant\int_{s'\in C}q\left(s,s'\right)\,d\rho\left(s'\right)$ for every $s\in A$ and $C\in\mathcal{F}\left(B\right)$.
		
	\end{enumerate}
\end{definition}

Furthermore, recurrence and aperiodicity for Harris chains are defined as follows \citep{huber:CRC:2016}:
\begin{definition}
	A Harris chain on $S$, $\left\{X\left(t\right)\right\}_{t=0}^{\infty}$, is recurrent if $\mathbf{P}_{s}\left[\tau_{A}<\infty\right] =1$ for every $s\in A$. A recurrent Harris chain is aperiodic if for every $s\in S$, there exists $t_{0}$ such that $\mathbf{P}_{s}\left[X\left(t\right)\in A\right] >0$ whenever $t\geqslant t_{0}$.
\end{definition}

We refer to a recurrent, aperiodic Harris chain as ergodic. The key result we need is the following, which can be found in \citep{huber:CRC:2016}: if $\kappa$ defines an ergodic Harris chain with stationary distribution $\pi\in\Delta\left(S\right)$, then
\begin{align}
	\lim_{t\rightarrow\infty} \sup_{E\in\mathcal{F}\left(S\right)} \left| \kappa^{t}\left(s,E\right) - \pi\left(E\right) \right| &= 0
\end{align}
whenever $s\in S$ satisfies $\mathbf{P}_{s}\left[ R_{A}<\infty \right] =1$, where $R_{A}=\inf\left\{t\in\left\{1,2,\dots\right\}\ |\ X\left(t\right)\in A\right\}$. In other words, the $t$-step transition kernel when starting from $s$, $\kappa^{t}\left(s,\--\right)$, converges in total variation to $\pi$ as $t\rightarrow\infty$.

Our main result with respect to marginal trait distributions can be stated succinctly as follows:
\begin{theorem}\label{thm:nuAndMu}
	$\lim_{T\rightarrow\infty}\mathbf{P}_{\mu_{0}}\left[S_{i}\left(T\right)\in E\right] =\pi\left(E\right)$ for every $i=1,\ldots ,N$; initial distribution, $\mu_{0}\in\Delta\left(S^{N}\right)$; and $E\in\mathcal{F}\left(S\right)$.
\end{theorem}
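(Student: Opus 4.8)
The plan is to reduce the statement about the marginal distribution of $S_i(T)$ to a statement about the mutation kernel $\kappa$ by tracking the lineage leading to individual $i$. The key observation is that the trait of individual $i$ at time $T$ is obtained by taking the trait of the founder of $i$'s lineage and applying the mutation kernel once for each birth event along that lineage. Formally, conditioned on $B_i^T = t$ (the number of birth events after $T$ updates on the lineage to $i$) and on the identity of the ancestor $j$ at the root of that lineage, the conditional distribution of $S_i(T)$ is exactly $\kappa^t(s_j, \--)$, where $s_j$ is the founder's trait drawn from $\mu_0$. This is because neutrality guarantees that the mutation kernel applied at each birth is independent of the state, so the only randomness that matters for $i$'s trait (given the demographic history) is the sequence of $B_i^T$ i.i.d. applications of $\kappa$ starting from the ancestral type.

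Next I would set up the convergence. Fix $E \in \mathcal{F}(S)$ and $\varepsilon > 0$. Since $\{X(t)\}_{t=0}^\infty$ is an ergodic Harris chain (this is the hypothesis carried implicitly from the surrounding discussion — the statement of Theorem~\ref{thm:nuAndMu} is made in the context where $\mu$ is the unique stationary distribution, which by Theorem~\ref{thm:ergodic} requires ergodicity of the mutation chain), the cited total-variation convergence result gives a $t_0$ such that $\sup_{E}\,|\kappa^t(s,E) - \pi(E)| < \varepsilon$ for all $t \geqslant t_0$ and all admissible starting points $s$. Then I would write
\begin{align}
	\left| \mathbf{P}_{\mu_0}\!\left[S_i(T)\in E\right] - \pi(E) \right| &\leqslant \mathbf{P}\!\left[B_i^T < t_0\right] + \mathbf{E}\!\left[ \left| \kappa^{B_i^T}\!\left(S_{\mathrm{anc}},E\right) - \pi(E)\right| \mathbf{1}\!\left\{B_i^T \geqslant t_0\right\}\right] ,
\end{align}
where $S_{\mathrm{anc}}$ denotes the (random) founding type of the lineage. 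The second term is bounded by $\varepsilon$ uniformly, by the choice of $t_0$ and the total-variation estimate applied pathwise. The first term tends to $0$ as $T\to\infty$ by Lemma~\ref{lem:lineageLength}. Hence $\limsup_{T\to\infty} | \mathbf{P}_{\mu_0}[S_i(T)\in E] - \pi(E)| \leqslant \varepsilon$, and since $\varepsilon$ was arbitrary the limit is $\pi(E)$.

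The main obstacle — and the step requiring the most care — is making the lineage/conditioning argument rigorous on a general measurable trait space: one must define the lineage as a random sequence of individuals obtained by tracing back the replacement maps $\alpha$ at each update, verify that $S_{\mathrm{anc}}$ has distribution given by the appropriate marginal of $\mu_0$, and check that conditionally on the entire demographic history (which replacement events occurred) the trait of $i$ is a composition of $B_i^T$ independent $\kappa$-steps. This demands that one disentangle the demographic randomness (the i.i.d. draws of $(R,\alpha)$) from the mutational randomness, which is legitimate precisely because of neutrality, but writing it down carefully with kernels rather than matrices is the fiddly part. A secondary subtlety is the applicability of the total-variation convergence theorem from the starting point $S_{\mathrm{anc}}$: one needs that $\mathbf{P}_{S_{\mathrm{anc}}}[R_A < \infty] = 1$ almost surely, which follows because recurrence of the Harris chain gives this for every starting state (or $\pi$-a.e. state, which suffices after noting the founder's type, after at least one mutation step, is distributed according to a measure absolutely continuous with respect to the relevant class); I would handle this by invoking recurrence in the form stated just before the theorem.
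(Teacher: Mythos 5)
Your proposal is correct and takes essentially the same route as the paper: condition on the number of birth events $B_i^{T}$ along the lineage leading to $i$, use ergodicity of the mutation chain to choose $t_{0}$, split according to $B_i^{T}<t_{0}$ versus $B_i^{T}\geqslant t_{0}$, and let Lemma~\ref{lem:lineageLength} kill the first piece. The only small discrepancy is that you claim a $t_{0}$ uniform over all admissible starting states, which the cited Harris-chain convergence result does not literally provide; the paper instead fixes the founder's type $s$ (so $t_{0}$ may depend on $s$) and then passes to an arbitrary $\mu_{0}$, and conditioning on the ancestral type in this way (with bounded convergence) repairs your argument without changing its substance.
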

\begin{proof}
	Since $\left\{X\left(t\right)\right\}_{t=0}^{\infty}$ is ergodic, for each $s\in S$ and $\varepsilon >0$, there exists $t_{0}\geqslant 0$ such that whenever $t\geqslant t_{0}$,
	\begin{align}
		\sup_{E\in\mathcal{F}\left(S\right)} \left| \mathbf{P}_{s}\left[X\left(t\right)\in E\right] -\pi\left(E\right) \right| &< \varepsilon .
	\end{align}
	Thus, for each $E\in\mathcal{F}\left(S\right)$ and $t\geqslant t_{0}$, we have $\pi\left(E\right) -\varepsilon <\mathbf{P}_{s}\left[X\left(t\right)\in E\right] <\pi\left(E\right) +\varepsilon$, which gives
	\begin{align}
		\left(\pi\left(E\right) -\varepsilon\right)\mathbf{P}\left[B_{i}^{T}\geqslant t_{0}\right] < \sum_{t=t_{0}}^{T} \mathbf{P}_{s}\left[X\left(t\right)\in E\right] \mathbf{P}\left[B_{i}^{T}=t\right] < \left(\pi\left(E\right) +\varepsilon\right)\mathbf{P}\left[B_{i}^{T}\geqslant t_{0}\right] .
	\end{align}
	Since $\lim_{T\rightarrow\infty}\mathbf{P}\left[B_{i}^{T}\geqslant t_{0}\right] =1$ for any $t_{0}\geqslant 0$, we have
	\begin{align}
		\lim_{T\rightarrow\infty}\sum_{t=0}^{T} \mathbf{P}_{s}\left[X\left(t\right)\in E\right] \mathbf{P}\left[B_{i}^{T}=t\right] &= \sum_{t=0}^{t_{0}-1} \mathbf{P}_{s}\left[X\left(t\right)\in E\right] \lim_{T\rightarrow\infty}\mathbf{P}\left[B_{i}^{T}=t\right] \nonumber \\
		&\quad + \lim_{T\rightarrow\infty}\sum_{t=t_{0}}^{T} \mathbf{P}_{s}\left[X\left(t\right)\in E\right] \mathbf{P}\left[B_{i}^{T}=t\right] \nonumber \\
		&= \lim_{T\rightarrow\infty}\sum_{t=t_{0}}^{T} \mathbf{P}_{s}\left[X\left(t\right)\in E\right] \mathbf{P}\left[B_{i}^{T}=t\right] \nonumber \\
		&\in \left(\pi\left(E\right) -\varepsilon ,\pi\left(E\right) +\varepsilon\right) . \label{eq:limitAlongLineage}
	\end{align}
	Since Eq.~\ref{eq:limitAlongLineage} holds for every $s\in S$, and since $\varepsilon >0$ was arbitrary, it follows that $\lim_{T\rightarrow\infty}\mathbf{P}_{\mu_{0}}\left[S_{i}\left(T\right)\in E\right] =\pi\left(E\right)$ for every $\mu_{0}\in\Delta\left(S^{N}\right)$.
\end{proof}

Let $\delta_{s}$ denote the Dirac measure on $S$ centered at $s\in S$. The frequency of players whose type lies in $E\in\mathcal{F}\left(S\right)$ at time $T$ is
\begin{align}
	\mathbf{E}_{\mu_{0}}\left[ \frac{1}{N} \sum_{i=1}^{N} \delta_{S_{i}\left(T\right)}\left(E\right)\right] &= \frac{1}{N} \sum_{i=1}^{N} \mathbf{P}_{\mu_{0}}\left[S_{i}\left(T\right)\in E\right] ,
\end{align}
which approaches $\pi\left(E\right)$ as $t\rightarrow\infty$ by Theorem~\ref{thm:nuAndMu}; therefore, $\pi$ gives the long-term trait frequencies. In the setting of Theorem~\ref{thm:ergodic}, if $\left\{\mathbf{S}\left(T\right)\right\}_{T=0}^{\infty}$ has a unique stationary distribution, $\mu$, then $\mu\left(s_{i}=k\right) =\pi_{k}$ for every $i=1,\ldots,N$ and $k\in S$, i.e. the marginal distribution of each individual is exactly $\pi$.

Although our focus has been on haploid individuals, we note that Theorem \ref{thm:nuAndMu} can be extended to populations with diploid, haplodiploid, or polyploid genetics, with either sexual or asexual reproduction (or a combination of both) as long as \textit{(i)} only a single genetic locus is considered, \textit{(ii)} there is no recombination within this locus, and \textit{(iii)} the theorem is understood to apply at the level of alleles, rather than allele combinations or traits. The idea of this extension is to treat the alleles at this locus as asexual replicators, and let $i=1,\ldots, N$ index \emph{genetic sites} rather than individuals \citep{allen:arxiv:2018}. These genetic sites are distributed among individuals, so that each individual has a number of genetic sites equal to its ploidy. Each genetic site $i$ contains a single allele $s_i \in S$, where $S$ represents the set of possible alleles at this locus. During transitions, the alleles in a subset $R$ of genetic sites are replaced by (possibly mutated) copies of the alleles in other genetic sites, as determined by the chosen replacement event, $(R,\alpha)$. The probability distribution over replacement events, $\left\{p_{\left(R,\alpha\right)}\right\}_{\left(R,\alpha\right)}$, encodes all necessary information about ploidy, sexes, and mating.

Formally, this gene's-eye framework is mathematically equivalent to the framework based on haploid individuals \citep{allen:arxiv:2018}. All of our results therefore carry over to sexually-reproducing populations without any additional mathematical assumptions. In applying these results to sexually-reproducing populations, there is, however, an implicit \emph{biological} assumption that there is no recombination \emph{within the locus in question}. That is, each allele in a new offspring is a copy of a single allele in one parent, not a mixture of two (or more) alleles, which is reasonable if the locus represented by $S$ is small enough that linkage within the locus can be assumed complete. We also emphasize that Theorem \ref{thm:nuAndMu} characterizes the limiting frequencies of alleles but not of allele combinations in individuals. For example, in a diploid population with alleles $A$ and $a$, Theorem \ref{thm:nuAndMu} characterizes the limiting frequencies of $A$ and $a$, but not of $AA$, $Aa$, and $aa$. If different allele combinations correspond to different traits, Theorem \ref{thm:nuAndMu} does not characterize the trait distribution in this case. The limiting frequencies of allele combinations in individuals depend on the population structure, i.e.~on the replacement rule.

Even if the mutation process does not have a unique stationary distribution (such as when there are multiple absorbing states), the proof of Theorem~\ref{thm:nuAndMu} can still be used to derive the marginal distributions of every individual when the starting condition is monomorphic. Suppose that every individual in the population initially has trait $s\in S$. There will eventually be enough births along every lineage for the mutation process to reach its limiting distribution (provided the limit exists). Moreover, since the lineage leading to $i$ at time $T$ can start at any $j\in\left\{1,\dots ,N\right\}$ at time $0$, we know the initial condition of this chain, $s$, provided the population starts out monomorphic. We state this observation as a proposition:
\begin{proposition}\label{prop:notErgodic}
	If $s\in S$, $E\in\mathcal{F}\left(S\right)$, and $i\in\left\{1,\dots ,N\right\}$, then, provided the limits exist,
	\begin{align}
		\lim_{T\rightarrow\infty}\mathbf{P}_{\left(s,\dots ,s\right)}\left[ S_{i}\left(T\right)\in E \right] &= \lim_{t\rightarrow\infty}\mathbf{P}_{s}\left[X\left(t\right)\in E\right] .
	\end{align}
\end{proposition}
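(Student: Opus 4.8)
The plan is to mimic the proof of Theorem~\ref{thm:nuAndMu} almost verbatim, replacing the stationary distribution $\pi$ by the limiting distribution $\lim_{t\to\infty}\mathbf{P}_s[X(t)\in E]$, which by hypothesis exists. The key structural observation is the one already used implicitly in the earlier proof: conditioned on the number of birth events $B_i^T = t$ along the lineage leading to individual $i$, the trait of $i$ at time $T$ is distributed exactly as $X(t)$, the time-$t$ state of the mutation process, \emph{started from the trait of the ancestral individual at time $0$}. When the population starts monomorphic at $s$, every ancestor has trait $s$, so this conditional law is $\mathbf{P}_s[X(t)\in E]$ regardless of which node $j$ the lineage traces back to. Hence one has the exact decomposition
\begin{align}
	\mathbf{P}_{(s,\dots,s)}\left[S_i(T)\in E\right] &= \sum_{t=0}^{T} \mathbf{P}_s\left[X(t)\in E\right]\,\mathbf{P}\left[B_i^T = t\right].
\end{align}

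Next I would pass to the limit $T\to\infty$ in this identity. Write $\ell(E) \coloneqq \lim_{t\to\infty}\mathbf{P}_s[X(t)\in E]$. Fix $\varepsilon>0$ and choose $t_0$ so that $\left|\mathbf{P}_s[X(t)\in E] - \ell(E)\right| < \varepsilon$ for all $t\geqslant t_0$. Splitting the sum at $t_0$, the low-order terms $\sum_{t=0}^{t_0-1}\mathbf{P}_s[X(t)\in E]\,\mathbf{P}[B_i^T=t]$ vanish as $T\to\infty$ because Lemma~\ref{lem:lineageLength} gives $\mathbf{P}[B_i^T = t]\to 0$ for each fixed $t$ (equivalently $\mathbf{P}[B_i^T\geqslant t_0]\to 1$). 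The high-order terms are squeezed: $(\ell(E)-\varepsilon)\,\mathbf{P}[B_i^T\geqslant t_0] \leqslant \sum_{t=t_0}^{T}\mathbf{P}_s[X(t)\in E]\,\mathbf{P}[B_i^T=t] \leqslant (\ell(E)+\varepsilon)\,\mathbf{P}[B_i^T\geqslant t_0]$, and since $\mathbf{P}[B_i^T\geqslant t_0]\to 1$, the limit superior and limit inferior of the left-hand side of the displayed identity both lie in $[\ell(E)-\varepsilon, \ell(E)+\varepsilon]$. Letting $\varepsilon\to 0$ yields $\lim_{T\to\infty}\mathbf{P}_{(s,\dots,s)}[S_i(T)\in E] = \ell(E)$, which is the claim. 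Note this argument does not require ergodicity of either chain, nor uniqueness of a stationary distribution — only that the one-dimensional limit $\ell(E)$ exists, which is exactly the proviso in the statement.

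The only genuinely non-routine point — and the step I would be most careful about — is justifying the conditional-law identity in full rigor for a general measurable trait space $S$, since the lineage construction (tracing individual $i$ at time $T$ back through the replacement events $(R_k,\alpha_k)$ to an ancestor at time $0$) must be set up so that, conditioned on the realized sequence of replacement events along that lineage and hence on $B_i^T$, the successive mutation draws are independent applications of the kernel $\kappa$. This is implicit in the Markov-chain definition of $\{\mathbf{S}(T)\}$ given in Section~2 and is precisely what was used (without detailed comment) in the proof of Theorem~\ref{thm:ergodic} and Theorem~\ref{thm:nuAndMu}; for the monomorphic start it is actually simpler, because one does not even need to know \emph{which} ancestor the lineage reaches — every ancestor carries the same trait $s$. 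I would therefore state the identity as a direct consequence of the coalescent/lineage description already in place, and then the limiting argument above closes the proof.
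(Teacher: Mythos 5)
Your proposal is correct and follows essentially the same route as the paper, which explicitly states that the proof of Theorem~\ref{thm:nuAndMu} carries over once one notes that a monomorphic start pins down the initial trait $s$ of every lineage and that Lemma~\ref{lem:lineageLength} guarantees enough births along it, with $\pi(E)$ replaced by the assumed limit $\lim_{t\to\infty}\mathbf{P}_{s}\left[X\left(t\right)\in E\right]$. Your explicit decomposition over $\left\{B_{i}^{T}=t\right\}$ and the squeeze argument simply spell out what the paper leaves as an observation, so there is no substantive difference.
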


\begin{example}[Regeneration process]
	Suppose that each individual in a population is a (bacterial) cell dividing at a constant rate, and mutations occur during cell division. We denote by $A_{0}$ the ``wild type'' cells, which mutate to $A_{1}$ cells with probability $w$. $A_{1}$ cells contain the starting condition for the search process (for example, the duplicated gene). For $k>0$, mutation with probability $u$ leads from an $A_{k}$ cell to an $A_{k+1}$ cell. These steps are the forward mutations in the search process toward the new function. The search is lost with probability $v$, i.e. each $A_{k}$ cell (with $k>0$) mutates back to an $A_{0}$ cell with probability $v$. $v$ can represent the rate of deletion events, nonsense mutations, or any missense mutation that leads away from the target because then the search is essentially lost. It is natural to assume that $v>u$, meaning that at each step, it is more likely that the search is lost than that a mutation is made in the direction of the target. This mutation scheme, which was introduced by \citet{knoll:SA:2017}, is known as the ``regeneration process."
	
	Consider a Wright-Fisher process in a population of size $N$. Generations are non-overlapping, and at each time step the current generation is sampled (with replacement) from the previous generation \citep{ewens:S:2004,imhof:JMB:2006,der:TPB:2011}. In other words, every individual in the current generation is the offspring of individual $i$ in the previous generation with probability $1/N$ for $i=1,\dots ,N$. The total population size is strictly constant, and all cells have the same reproductive rate. In other words, evolution is neutral. On each birth event, the offspring mutates from the parent's type according to the regeneration process. We are interested in the stationary distribution of this stochastic process, and, in particular, the probability that a randomly-drawn cell is of type $A_{k}$. Denote by $\pi_{k}$ this probability and suppose first that $N=1$. By the definition of a stationary distribution and the mutational scheme for the regeneration process, we have the recurrence relations
	\begin{subequations}
		\begin{align}
			\pi_{0} &= \pi_{0}\left(1-w\right) + \left(1-\pi_{0}\right) v ; \\
			\pi_{1} &= \pi_{0} w + \pi_{1}\left(1-u-v\right) ; \\
			\pi_{k} &= \pi_{k-1}u + \pi_{k}\left(1-u-v\right) ; \qquad \left( 1<k<m \right) \\
			\pi_{m} &= \pi_{m-1}u + \pi_{m}\left(1-v\right) .
		\end{align}
	\end{subequations}
	The solution to this system of equations is given explicitly by
	\begin{subequations}\label{eq:stationarySolution}
		\begin{align}
			\pi_{0} &= \frac{v}{v+w} ; \\
			\pi_{k} &= \left(\frac{u}{u+v}\right)^{k-1}\left(\frac{w}{u+v}\right)\left(\frac{v}{v+w}\right) ; \qquad \left( 0<k<m \right) \\
			\pi_{m} &= \left(\frac{u}{v}\right)\left(\frac{u}{u+v}\right)^{m-2}\left(\frac{w}{u+v}\right)\left(\frac{v}{v+w}\right) .
		\end{align}
	\end{subequations}
	For simplicity, we assume that $u$ and $v$ do not depend on $A_{k}$. However, one can similarly calculate $\pi$ when $u_{k}$ is the probability of going from $A_{k}$ to $A_{k+1}$ and $v_{k}$ is the probability of going from $A_{k}$ to $A_{0}$.
	
	Determining the equilibrium frequency of $A_{k}$ in a population of any size, $N$, is actually quite simple: it is still just $\pi_{k}$ and can be calculated by considering a stochastic process that follows a single lineage. That is, the frequencies of the cell types are the same for a population of size $N=1$ as it is for a population of any size. Furthermore, the probability that individual $i$ has type $A_{k}$ is $\pi_{k}$ (which is independent of $i$). Fig.~\ref{fig:timeSeries} shows that heterogeneity induced by population structure can, however, strongly influence a type's fluctuation around its mean frequency.
	
	\begin{figure}
		\centering
		\includegraphics[width=\textwidth]{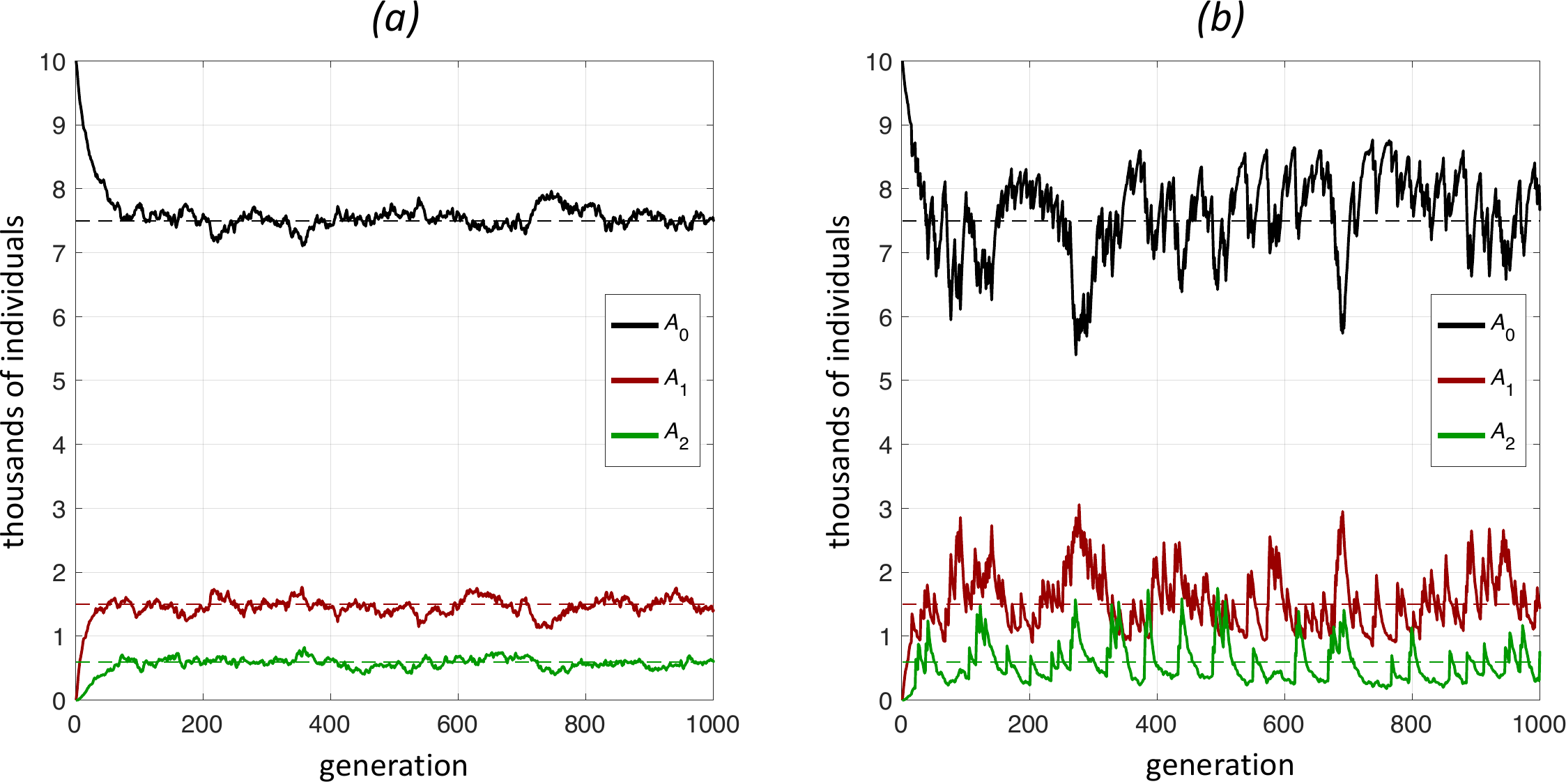}
		\caption{The frequency of three cell types in a population of size $N=10^{4}$. Mutations in both panels are governed by the regeneration process with $m=10^{3}$ and mutation rates $w=0.01$, $u=0.02$, and $v=0.03$. The mean abundance of each type in this mutation process is indicated by a dashed line. In \textit{(a)}, the population is updated according to a Wright-Fisher rule. \textit{(b)} illustrates a modified (and asymmetric) Wright-Fisher rule in which there exists a single cell in every generation that is more likely to reproduce than the others (independent of cell type). In \textit{(a)}, each offspring chooses a parent uniformly-at-random from the previous generation (i.e. each with probability $1/N=10^{-4}$). In \textit{(b)}, one marked individual is chosen as the parent with probability $0.05$ and each of the $N-1$ remaining individuals is chosen with probability $\left(1-0.05\right)/\left(N-1\right)\ll 0.05$. Despite the heterogeneity in \textit{(b)}, the mean trait abundances are the same; only the fluctuations around these means change.\label{fig:timeSeries}}
	\end{figure}
\end{example}

\begin{example}[One-dimensional Markov chain]
	Another simple example is when the mutation process, defined by transition matrix $M$, is one-dimensional. In this case, the state space is $S=\left\{0,1,\dots ,m\right\}$ for some $m\geqslant 0$, and transitions from a given state are at most one step in either direction. In other words, $M_{s,s'}=0$ whenever $\left| s-s'\right| >1$. One can easily check that the stationary distribution for $M$ is given by
	\begin{align}\label{eq:statDistOneDim}
		\pi_{s} &= \frac{\prod_{k=0}^{s-1}\frac{M_{k,k+1}}{M_{k+1,k}}}{1+\sum_{s=1}^{m}\prod_{k=0}^{s-1}\frac{M_{k,k+1}}{M_{k+1,k}}} .
	\end{align}
	Eq.~\ref{eq:statDistOneDim} gives a simple expression for the long-run probability of finding an individual in type $s$, both in the mutation process itself and in any evolving population in which mutations are neutral and governed by $M$.
\end{example}

\section{Intra-lineage mixing times}
So far, we have focused on neutral trait frequencies, which are not affected by the distribution over replacement events, $\left\{p_{\left(R,\alpha\right)}\right\}_{\left(R,\alpha\right)}$. We now turn to rate of convergence to these equilibrium frequencies along the lineages, characterized in terms of mixing times. The proof of Theorem~\ref{thm:nuAndMu} is based on the fact that, eventually, all lineages will contain sufficiently many birth events for the mutation process to mix. Here, we consider the amount of time--measured in update steps--for this mixing to occur along a given lineage. Since both the mutation processes and evolutionary update rules we consider are discrete in time, all references to time shall indicate non-negative integers.

Let $\pi\in\Delta\left(S\right)$ be the stationary distribution for the mutation process. For $t\in\left\{1,2,\dots\right\}$, let
\begin{align}
	\psi\left(t\right) \coloneqq \sup_{s\in S}\sup_{E\in\mathcal{F}\left(S\right)}\left| \mathbf{P}_{s}\left[X\left(t\right)\in E\right] - \pi\left(E\right) \right| .
\end{align}
The $\varepsilon$-mixing time of the mutation process is then $\tau\left(\varepsilon\right)\coloneqq\min\left\{ t\in\left\{1,2,\dots\right\}\ :\ \psi\left(t\right) <\varepsilon\right\}$. (We use the symbol $\psi$ to denote this distance rather than the usual $d$ in order to avoid confusion with death rates.)

At the population level, we can consider an analogue of mixing time along lineages. Recall that $B_{i}^{T}$ is the number of birth events in the lineage leading to individual $i$ after $T$ update steps. Since the marginal distributions all converge to $\pi$, the distance between $i$'s trait at time $T\in\left\{1,2,\dots\right\}$ and the stationary distribution is
\begin{align}
	\Psi_{i}\left(T\right) &= \sup_{\mathbf{s}\in S^{N}}\sup_{E\in\mathcal{F}\left(S\right)} \left| \mathbf{P}_{\mathbf{s}}\left[ S_{i}\left(T\right)\in E \right] - \pi\left(E\right) \right| .
\end{align}
The analogue of $\tau\left(\varepsilon\right)$ in this case is $\mathcal{T}_{i}\left(\varepsilon\right)\coloneqq\min\left\{ T\in\left\{1,2,\dots\right\}\ :\ \Psi_{i}\left(T\right) <\varepsilon\right\}$, which is the $\varepsilon$-mixing time of the process along the lineage leading to individual $i$. We have the trivial lower bound $\mathcal{T}_{i}\left(\varepsilon\right)\geqslant\tau\left(\varepsilon\right)$ for all $i=1,\dots ,N$.

\begin{example}[Non-overlapping generations]
	If generations do not overlap, then $\mathbf{P}\left[B_{i}^{T}=T\right] =1$. Trivially, then, we have $\Psi_{i}\left(T\right) =\psi\left(T\right)$ and $\mathcal{T}_{i}\left(\varepsilon\right) =\tau\left(\varepsilon\right)$ for every $i=1,\dots ,N$.
\end{example}

Suppose, for instance, that the death rate is constant and equal to $d\in\left(0,1\right)$, so that each individual is updated every $1/d$ update steps (on average). Since all individuals have the same probability of being replaced, $\mathcal{T}_{i}\left(\varepsilon\right)$ is independent of $i$, and we call this mixing time simply $\mathcal{T}\left(\varepsilon\right)$. If $\tau$ is the mixing time of the mutation chain, then a natural guess for $\mathcal{T}$ is simply $\tau /d$ because, on average, the lineage experiences a mutation (i.e. a step in the mutation chain) every $1/d$ updates. However, it turns out that $\tau /d$ is generally a bad approximation of $\mathcal{T}$ because it doesn't take into account enough information about the distribution of lineage length. In fact, one cannot even use $\tau /d$ to establish a general upper or lower bound on $\mathcal{T}_{i}$. Fig.~\ref{fig:lowerUpper} illustrates this comparison for death-birth updating with $d=1/N$, where $N$ is the population size.

\begin{figure}
	\centering
	\includegraphics[width=\textwidth]{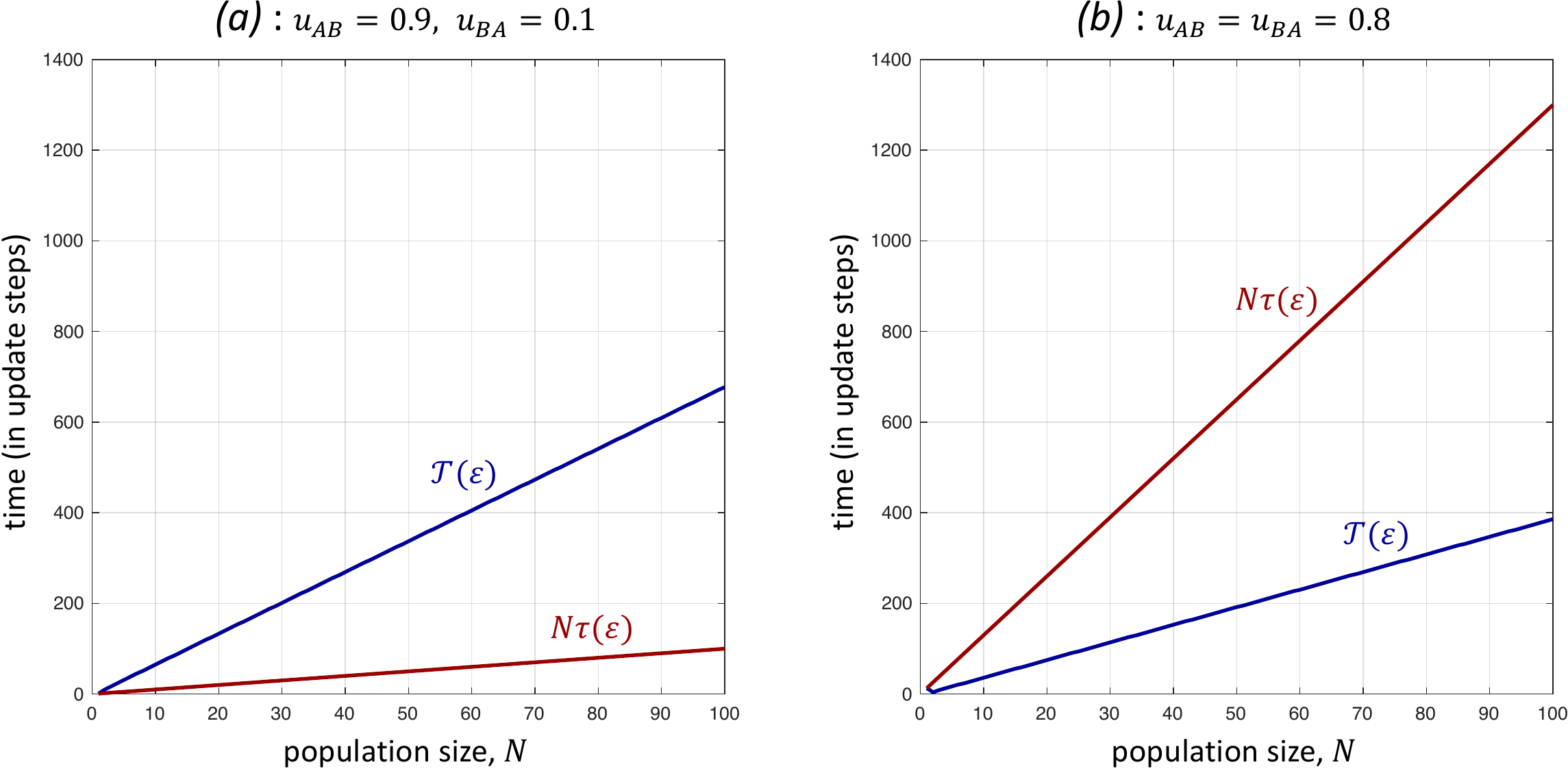}
	\caption{Intra-lineage mixing time, $\mathcal{T}\left(\varepsilon\right)$, relative to the mixing time of the underlying mutation process, $\tau\left(\varepsilon\right)$, for death-birth updating (shown here for $\varepsilon =10^{-3}$). Every individual has one of two types, $A$ or $B$, and upon reproduction $A$ mutates to $B$ with probability $u_{AB}$ and $B$ mutates to $A$ with probability $u_{BA}$. The stationary distribution of this process puts probability $\frac{u_{BA}}{u_{AB}+u_{BA}}$ on state $A$ and probability $\frac{u_{AB}}{u_{AB}+u_{BA}}$ on state $B$. Since the death rate is constant and equal to $1/N$ (regardless of the population's spatial structure), each individual is updated every $N$ steps (on average). However, $\mathcal{T}\left(\varepsilon\right)$ differs significantly from the rescaled mixing time of the mutation process, $N\tau\left(\varepsilon\right)$. In fact, $N\tau\left(\varepsilon\right)$ neither a general upper nor lower bound on $\mathcal{T}\left(\varepsilon\right)$, which can be seen by simply varying $u_{AB}$ and $u_{BA}$.\label{fig:lowerUpper}}
\end{figure}

Instead, we can bound $\Psi_{i}\left(T\right)$ by the average of $\psi\left(B_{i}^{T}\right)$, which we state as a simple lemma:
\begin{lemma}\label{lem:generalMixingBound}
	$\Psi_{i}\left(T\right)\leqslant\mathbf{E}\left[\psi\left(B_{i}^{T}\right)\right]$ for every $i=1,\dots ,N$ and $T\in\left\{1,2,\dots\right\}$.
\end{lemma}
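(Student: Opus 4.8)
The plan is to condition on the realized genealogy of individual $i$ and then exploit the fact that, the process being neutral, the replacement events are drawn independently of the current state (and of the mutation randomness). Fix $i$ and $T$, and let $\mathcal{G}$ be the $\sigma$-algebra generated by the replacement events $\left(R_{1},\alpha_{1}\right),\dots,\left(R_{T},\alpha_{T}\right)$ used in the first $T$ updates. Tracing backwards from $i$ at time $T$, these events determine an ancestral lineage $i=i_{T},i_{T-1},\dots,i_{0}$ with $i_{k-1}=\tilde{\alpha}_{k}\left(i_{k}\right)$, and both the ancestor $i_{0}$ and the number of birth events along the lineage, $B_{i}^{T}=\#\left\{1\leqslant k\leqslant T\ :\ i_{k}\in R_{k}\right\}$, are $\mathcal{G}$-measurable. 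Since the distribution over replacement events does not depend on the current state, the law of $B_{i}^{T}$ is the same for every starting configuration $\mathbf{s}$.

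First I would establish the key conditional identity: for every $\mathbf{s}=\left(s_{1},\dots,s_{N}\right)\in S^{N}$ and $E\in\mathcal{F}\left(S\right)$,
\begin{align}
	\mathbf{P}_{\mathbf{s}}\left[ S_{i}\left(T\right)\in E \mid \mathcal{G} \right] &= \kappa^{B_{i}^{T}}\!\left( s_{i_{0}}, E \right) .
\end{align}
To see this, note that once $\mathcal{G}$ is fixed the lineage $i_{0},\dots,i_{T}$ is deterministic, and the only remaining randomness is the mutation draws, which by construction are independent of $\mathcal{G}$. Along this lineage the trait is unchanged on every non-birth step ($i_{k}\notin R_{k}$) and is replaced by an independent draw from $\kappa$ applied to the current trait on each of the $B_{i}^{T}$ birth steps; deleting the ``lazy'' (non-birth) steps therefore exhibits $S_{i}\left(T\right)$ as the time-$B_{i}^{T}$ state of the mutation chain started from $s_{i_{0}}$. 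A short induction on $T$ makes this precise.

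Granting the identity, I would take expectations over $\mathcal{G}$, write $\pi\left(E\right)=\mathbf{E}\left[\pi\left(E\right)\right]$, and combine the triangle inequality for expectations with the definition of $\psi$:
\begin{align}
	\left| \mathbf{P}_{\mathbf{s}}\left[ S_{i}\left(T\right)\in E \right] - \pi\left(E\right) \right| &= \left| \mathbf{E}\!\left[ \kappa^{B_{i}^{T}}\!\left( s_{i_{0}},E\right) - \pi\left(E\right) \right] \right| \nonumber \\
	&\leqslant \mathbf{E}\!\left[ \left| \kappa^{B_{i}^{T}}\!\left( s_{i_{0}},E\right) - \pi\left(E\right) \right| \right] \leqslant \mathbf{E}\!\left[ \psi\!\left( B_{i}^{T}\right) \right] .
\end{align}
The right-hand side depends on neither $\mathbf{s}$ nor $E$, so taking the supremum over both gives $\Psi_{i}\left(T\right)\leqslant\mathbf{E}\left[\psi\left(B_{i}^{T}\right)\right]$.

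The one point needing a word of care is the event $\left\{B_{i}^{T}=0\right\}$, which has positive probability whenever $i$ can avoid replacement for $T$ consecutive steps; there $\psi\left(B_{i}^{T}\right)=\psi\left(0\right)$ must be given a meaning. I would simply extend the mixing profile to $t=0$ by the same formula, $\psi\left(0\right)\coloneqq\sup_{s\in S}\sup_{E\in\mathcal{F}\left(S\right)}\left|\delta_{s}\left(E\right)-\pi\left(E\right)\right|$, which is consistent with $\psi$ being non-increasing in $t$ (an immediate consequence of $\kappa^{t+1}\left(s,\cdot\right)=\int\kappa\left(s,ds'\right)\kappa^{t}\left(s',\cdot\right)$ together with convexity of the total variation distance), and note that when generations do not overlap this case is vacuous. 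The main obstacle is not any single estimate but rather setting up the genealogical conditioning cleanly enough that the conditional-law identity is rigorous; once that is in hand, the rest is Jensen's inequality and unwinding the definitions.
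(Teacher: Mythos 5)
Your argument is correct and is essentially the paper's proof: the paper likewise decomposes $\mathbf{P}_{\mathbf{s}}\left[S_{i}\left(T\right)\in E\right]$ according to the state-independent law of $B_{i}^{T}$, bounds each conditional term by $\sup_{s\in S}\sup_{E\in\mathcal{F}\left(S\right)}\left|\mathbf{P}_{s}\left[X\left(t\right)\in E\right]-\pi\left(E\right)\right|$, and recognizes the resulting sum as $\mathbf{E}\left[\psi\left(B_{i}^{T}\right)\right]$. Your explicit conditioning on the genealogy (and your remark about defining $\psi\left(0\right)$) merely spells out details the paper leaves implicit, so no substantive difference in approach.
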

\begin{proof}
	Fix $i=1,\dots ,N$ and $T\in\left\{1,2,\dots\right\}$. Straightforward manipulations give
	\begin{align}
		\Psi_{i}\left(T\right) &= \sup_{\mathbf{s}\in S^{N}}\sup_{E\in\mathcal{F}\left(S\right)} \left| \mathbf{P}_{\mathbf{s}}\left[ S_{i}\left(T\right)\in E \right] - \pi\left(E\right) \right| \nonumber \\
		&\leqslant \sum_{t=0}^{T} \sup_{s\in S}\sup_{E\in\mathcal{F}\left(S\right)} \left| \mathbf{P}_{s}\left[ X\left(t\right)\in E \right] - \pi\left(E\right) \right| \mathbf{P}\left[ B_{i}^{T}=t \right] \nonumber \\
		&= \mathbf{E}\left[ \sup_{s\in S}\sup_{E\in\mathcal{F}\left(S\right)} \left| \mathbf{P}_{s}\left[ X\left(B_{i}^{T}\right)\in E \right] - \pi\left(E\right) \right| \right] \nonumber \\
		&= \mathbf{E}\left[\psi\left(B_{i}^{T}\right)\right] ,
	\end{align}
	as desired.
\end{proof}

To calculate $\mathbf{E}\left[\psi\left(B_{i}^{T}\right)\right]$, we note that $\mathbf{P}\left[ B_{i}^{T}=t\right]$ satisfies the multivariate recurrence relation,
\begin{align}
	\mathbf{P}\left[ B_{i}^{T}=t \right] &= \mathbf{P}\left[ B_{i}^{T-1}=t \right]\left(1-d_{i}\right) + \sum_{j=1}^{N} \mathbf{P}\left[ B_{j}^{T-1}=t-1 \right] e_{ji} ,
\end{align}
with boundary conditions $\mathbf{P}\left[ B_{i}^{T}=0 \right] =\left(1-d_{i}\right)^{T}$ for $i=1,\dots ,N$. Explicitly, for $1<t\leqslant T$, we have
\begin{align}
	\mathbf{P}\left[ B_{i}^{T}=t \right] &= \sum_{\substack{k_{0},\dots ,k_{t}\geqslant 1 \\ k_{0}+\cdots +k_{t}=T}} \left(1-d_{i}\right)^{k_{t}-1} \sum_{j_{t-1}=1}^{N}\left(1-d_{j_{t-1}}\right)^{k_{t-1}-1}e_{j_{t-1},j_{t}}\cdots\sum_{j_{0}=1}^{N}\left(1-d_{j_{0}}\right)^{k_{0}-1}e_{j_{0},j_{1}} . \label{eq:generalBi}
\end{align}

While the general expression for $\mathbf{P}\left[ B_{i}^{T}=t\right]$ (Eq.~\ref{eq:generalBi}) can appear complicated (depending on the replacement rule, $\left\{p_{\left(R,\alpha\right)}\right\}_{\left(R,\alpha\right)}$, and the resulting demographic variables), we can further simplify the bound given by Lemma~\ref{lem:generalMixingBound} if some additional properties hold. We consider two cases in which one can be more explicit.

\subsection{Finite, ergodic mutation chains}\label{sec:finiteErgodicMixing}
If $S$ is finite and $\left\{X\left(t\right)\right\}_{t=0}^{\infty}$ is an irreducible and aperiodic, then there exist $C>0$ and $\alpha\in\left(0,1\right)$ such that $\psi\left(t\right)\leqslant C\alpha^{t}$ for every $t\geqslant 1$ \citep{levin:AMS:2008,durrett:CUP:2009}. From Lemma~\ref{lem:generalMixingBound},
\begin{align}
	\Psi_{i}\left(T\right) &\leqslant C\mathbf{E}\left[\alpha^{B_{i}^{T}}\right] . \label{eq:alphaBound}
\end{align}

\begin{example}[Constant death rate]
	If the death rate is constant, meaning there exists $d$ with $d_{i}=d$ for $i=1,\dots ,N$, then $\mathbf{P}\left[ B_{i}^{T}=t \right] =\binom{T}{t}\left(1-d\right)^{T-t}d^{t}$. From Eq.~\ref{eq:alphaBound}, we have the upper bound
	\begin{align}
		\Psi_{i}\left(T\right) &\leqslant C\left(1-d\left(1-\alpha\right)\right)^{T}
	\end{align}
	Under death-birth updating, for example, the death rate is constant and equal to $1/N$. The death rate is also clearly constant when generations are non-overlapping, and in this case $1-d\left(1-\alpha\right) =\alpha$, which gives back the same upper bound of $C\alpha^{T}$. On the other hand, when $d<1$ we have $1-d\left(1-\alpha\right) >\alpha$.
\end{example}

\subsection{Reversible mutation chains and spectral theory}
Suppose that the mutation process, $M$, is reversible with respect to $\pi$, meaning $\pi_{s}M_{s,s'}=\pi_{s'}M_{s',s}$ for every $s,s'\in S$. The matrix $\Lambda$ defined by $\Lambda_{s,s'}=\pi_{s}^{1/2}M_{s,s'}\pi_{s'}^{-1/2}$ is then symmetric with all of its eigenvalues real, $1=\lambda_{1}>\lambda_{2}\geqslant\cdots\geqslant\lambda_{\left| S\right|}\geqslant -1$. These eigenvalues are the same as those of $M$ since $M$ and $\Lambda$ represent the same linear transformation. Let $\pi_{\ast}\coloneqq\min_{1\leqslant i\leqslant\left| S\right|}\pi_{s_{i}}$. By standard results on mixing times in reversible Markov chains \citep[see][]{montenegro:FTTCS:2005}, we have
\begin{align}
	\psi\left(T\right) \leqslant \frac{1}{2}\sqrt{\frac{1-\pi_{\ast}}{\pi_{\ast}}}\max\left\{\left|\lambda_{2}\right| ,\left|\lambda_{\left| S\right|}\right|\right\}^{T} , \label{eq:mutationBoundReversible}
\end{align}

To see how the evolutionary process affects mixing along its lineages, we can again consider the case of a constant death rate for simplicity. If $d$ is the death rate for every individual, $i$, then the mutation process along any lineage is a Markov chain with transition matrix $M\left(d\right)\coloneqq\left(1-d\right) I+dM$. Moreover, $M\left(d\right)$ is reversible with respect to $\pi$ since $M$ is. Letting $\Lambda\left(d\right) =\left(1-d\right) I+d\Lambda$, we see that if $\Lambda\bm{v}=\lambda\bm{v}$, then $\Lambda\left(d\right)\bm{v}=\left(1-d\left(1-\lambda\right)\right)\bm{v}$. Since the map $\lambda\mapsto 1-d\left(1-\lambda\right)$ sends eigenvalues of $\Lambda$ to eigenvalues of $\Lambda\left(d\right)$, and since $\max_{2\leqslant i\leqslant\left| S\right|}\left| 1-d\left(1-\lambda_{i}\right)\right| =\max\left\{\left|1-d\left(1-\lambda_{2}\right)\right| ,\left|1-d\left(1-\lambda_{\left| S\right|}\right)\right|\right\}$, we have
\begin{align}
	\Psi\left(T\right) &\leqslant \frac{1}{2}\sqrt{\frac{1-\pi_{\ast}}{\pi_{\ast}}}\max\left\{\left|1-d\left(1-\lambda_{2}\right)\right| ,\left|1-d\left(1-\lambda_{\left| S\right|}\right)\right|\right\}^{T} .
\end{align}
Fig.~\ref{fig:upperBoundData} gives an example of this mixing time bound for the reversible chain with $\left| S\right| =3$ and transition matrix
\begin{align}
	M &= \begin{pmatrix}0.9429 & 0.0284 & 0.0287 \\ 0.0250 & 0.9638 & 0.0112 \\ 0.0360 & 0.0159 & 0.9481\end{pmatrix} . \label{eq:reversibleExample3}
\end{align}

\begin{figure}
	\centering
	\includegraphics[width=0.7\textwidth]{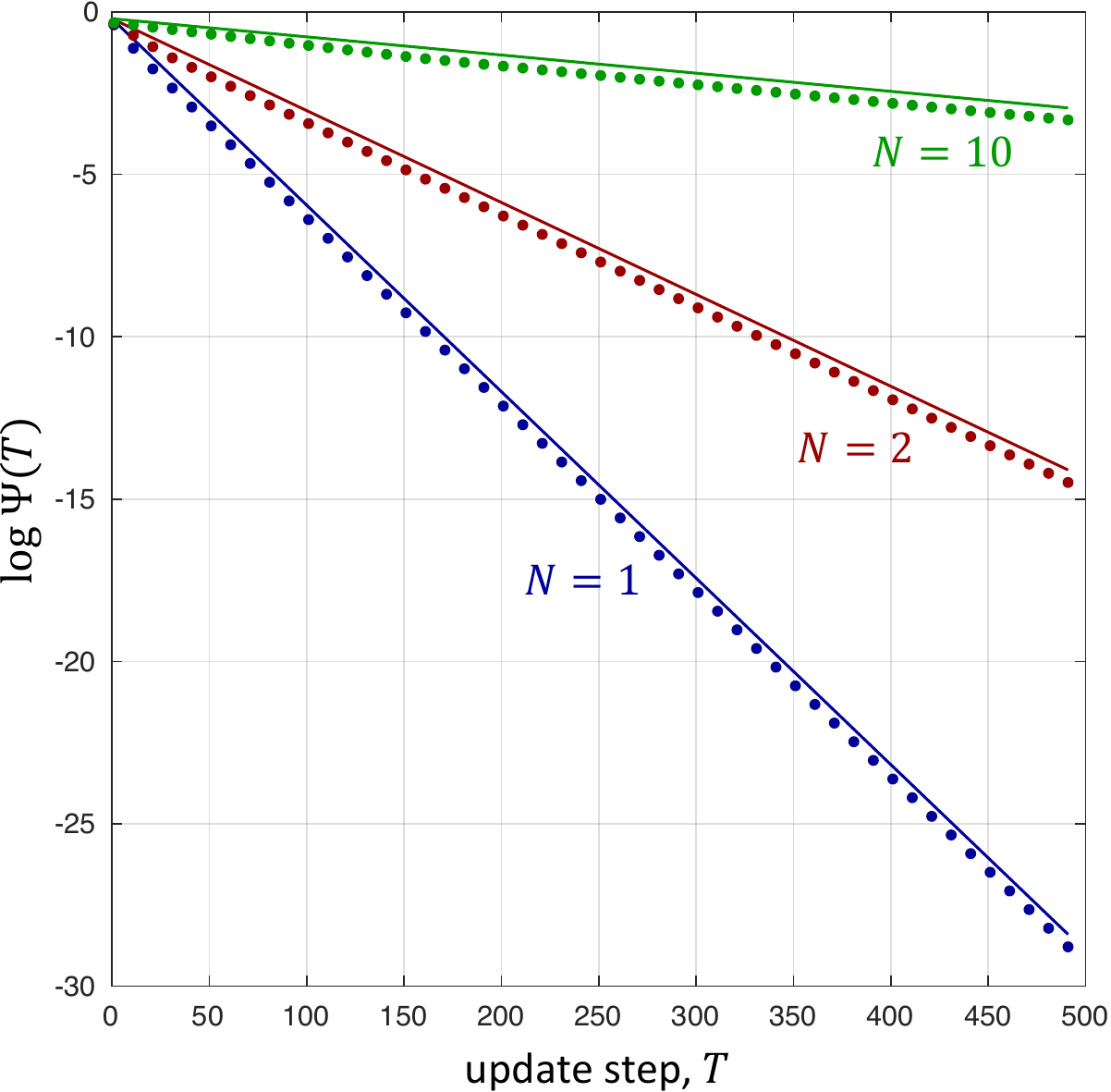}
	\caption{The distance between individual $i$'s trait and the stationary distribution after $T$ update steps, $\Psi\left(T\right)$, under death-birth updating. All of the marginal distributions converge to $\pi$ as $T\rightarrow\infty$, but the rates at which they do so depend on the population size, structure, and update rule. A population of size $N=1$ is the same as the underlying mutation process itself, and in this case $\Psi\left(T\right) =\psi\left(T\right)\leqslant C\alpha^{T}$ for some $C>0$ and $\alpha\in\left(0,1\right)$. The mutation process depicted here is reversible with $\left| S\right| =3$ traits and transition matrix Eq.~\ref{eq:reversibleExample3}, so we can set $C=\frac{1}{2}\sqrt{\frac{1-\pi_{\ast}}{\pi_{\ast}}}$ and $\alpha =\max\left\{\left|\lambda_{2}\right| ,\left|\lambda_{3}\right|\right\}$ (see Eq.~\ref{eq:mutationBoundReversible}). If the population has size $N$, then $\Psi\left(T\right)\leqslant C\left(1-\frac{1}{N}\left(1-\alpha\right)\right)^{T}$ since death rate for all individuals is $1/N$ under death-birth updating. Since $\Psi\left(T\right)$ converges rapidly to $0$ in all cases, we show both the predicted upper bounds (solid lines) and the actual values of $\Psi\left(T\right)$ (dots) on a logarithmic scale.\label{fig:upperBoundData}}
\end{figure}

On the other hand, if the death rate depends on $i$, we cannot necessarily transform the mutation process into one whose transition matrix is $M\left(d\right) =\left(1-d\right) I+dM$ for some $d$. Instead, starting at location $i$, one needs to keep track of where in the population each ancestor arises. Since $M$ is reversible, its time-reversal, $\widetilde{M}$, is again just $M$. Consider the Markov chain on $\left\{1,\dots ,N\right\}\times S$, $\left\{Y\left(T\right)\right\}_{T=0}^{\infty}$, with transitions given by
\begin{align}
	P_{\left(i,s\right) ,\left(j,s'\right)} &= 
	\begin{cases}
		e_{ji}M_{s,s'} & i\neq j\textrm{ or }s\neq s' , \\
		e_{ii}M_{s,s} + 1-d_{i} & i=j\textrm{ and }s=s' .
	\end{cases}
\end{align}
Starting at location $i$, one can then ask how many steps is required for the marginal trait distribution to be close to $\pi$ (in total variation). From the definition of $\Psi_{i}\left(T\right)$, it is easily seen that
\begin{align}
	\Psi_{i}\left(T\right) =\sup_{s\in S}\sup_{E\in\mathcal{F}\left(S\right)}\left|\mathbf{P}_{\left(i,s\right)}\left[Y\left(T\right)\in\left\{1,\dots ,N\right\}\times E\right] -\pi\left(E\right)\right| . \label{eq:DiReversed}
\end{align}
As a result, the mixing time of the marginal trait distribution of $\left\{Y\left(T\right)\right\}_{T=0}^{\infty}$ starting from $i$ is $\mathcal{T}_{i}\left(\varepsilon\right)$.

\subsection{Mutations induced by group actions}
One way to represent mutation is as a group action. Let $S$ be a finite set representing a collection of heritable traits, and let $G$ be a group (in the mathematical sense \citep[e.g., see][]{knapp:BB:2006}) representing the possible effects of mutation. Suppose that $G$ acts on $S$, meaning there exists a map $G\times S\rightarrow S$ sending $\left(g,s\right)$ to $gs\in S$ that satisfies the following properties: \textit{(i)} if $e$ is the identity element of $G$, then $es=s$ for every $s\in S$; and \textit{(ii)} if $g,h\in G$ and $s\in S$, then $\left(gh\right) s=g\left(hs\right)$. A special case is when $S=G$ and the action is the same as the binary operation in $G$.

For example, $G$ might be a group of three-dimensional rotations (a subgroup of $SO(3)$), acting on a set $S$ of possible orientations of some aspect of an organism's morphology. Or $G$ might be the group of permutations on $n$ elements, $\mathfrak{S}_{n}$, acting on the set $S$ of possible arrangements of $n$ transposable elements in a genome. An especially relevant example is $G=S=\left(\mathbb{Z}/2\mathbb{Z}\right)^{m}$; here $S$ can be understood as the group of binary strings (idealized genomes) of length $m$.

If $\nu\in\Delta\left(G\right)$ is a probability distribution on $G$, then one can define a Markov chain on $S$ with transitions
\begin{align}\label{eq:groupTransition}
	M_{s,s'} &\coloneqq \sum_{\substack{g\in G \\ gs=s'}} \nu\left(g\right) .
\end{align}
In order to ensure that this chain has a unique stationary distribution, we need to assume that the group action is transitive; that is, for any $s,s'\in S$, there exists $g\in G$ with $gs=s'$. For practical reasons, we make at least one of two other assumptions on the group action when considering the mixing times of $M$:

\subsubsection{Free group actions}
If the action of $G$ on $S$ is free, then $gs=hs$ implies that $g=h$. If the action of $G$ is on itself via its binary operation (in which case Eq.~\ref{eq:groupTransition} defines a random walk on $G$), then the action is free because every element of $G$ has an inverse. In other words, the only element of $G$ that sends $g\in G$ to itself is the identity element. A free action generalizes this property to actions of $G$ on another set, $S$.

For a free action, $M_{s,gs}\coloneqq\nu\left(g\right)$ for every $s\in S$ and $g\in G$. The distribution $\pi$ defined by $\pi_{g}=\left| G\right|^{-1}$ for $g\in G$ (i.e. the uniform distribution on $G$) is the stationary distribution for both $M$ and its time-reversal, $\widetilde{M}$, which satisfies $\widetilde{M}_{s,gs}=\nu\left(g^{-1}\right)$. A standard result on mixing times for random walks on groups \citep[see][Lemma 4.13 and Corollary 4.14]{levin:AMS:2008} is then easily seen to imply that $M$ and $\widetilde{M}$ have the same mixing time.

\subsubsection{Abelian group actions}
Let $G$ be an abelian group, meaning $gh=hg$ for every $g,h\in G$. An example of an abelian group is the set of binary strings of length $m$, $\left(\mathbb{Z}/2\mathbb{Z}\right)^{m}$, with the operation of addition. (The group of permutations on $n>2$ letters, $\mathfrak{S}_{n}$, is an example of a group that is \textit{not} abelian.) If $s\in S$ and $g,h\in G$ satisfy $gs=hs$, then $g^{-1}s=h^{-1}s$. The time-reversal of $M$ then satisfies
\begin{align}
	\widetilde{M}_{s,s'} &= \sum_{\substack{g\in G \\ gs=s'}} \nu\left(g^{-1}\right) .
\end{align}
Again, both $M$ and $\widetilde{M}$ have the uniform stationary distribution, and straightforward modifications of the arguments presented in \citep[][Lemma 4.13]{levin:AMS:2008} show that $M$ and $\widetilde{M}$ have the same mixing time.

If the group action is transitive and either free or abelian, it follows that one can obtain $\mathcal{T}_{i}\left(\varepsilon\right)$ from the time-reversed Markov chain on $\left\{1,\dots ,N\right\}\times S$, $\left\{Y\left(T\right)\right\}_{T=0}^{\infty}$, whose transitions are defined by
\begin{align}
	P_{\left(i,s\right) ,\left(j,s'\right)} &= 
	\begin{cases}
		e_{ji}\widetilde{M}_{s,s'} & i\neq j\textrm{ or }s\neq s' , \\
		e_{ii}\widetilde{M}_{s,s} + 1-d_{i} & i=j\textrm{ and }s=s' .
	\end{cases}
\end{align}
In particular, Eq.~\ref{eq:DiReversed} holds even though the mutation chain, $\left\{X\left(t\right)\right\}_{t=0}^{\infty}$, is not necessarily reversible.

\begin{remark}
	Although one can define a time-reversal of any mutation chain (not just one on a group or one that is reversible), the mixing time of the reversed chain need not coincide with the original chain. Therefore, the utility of reversing a chain and studying it along lineages is limited, as least for the purpose of analyzing mixing times.
\end{remark}

\section{Discussion}
The result on stationary trait frequencies is well-established in the special case of two competing types on a homogeneous population structure \citep{taylor:JTB:2007,debarre:JTB:2017}. Similarly, in population genetics, it has been noted that the so-called ``common ancestor" process is the same as the mutation process when there is no selection \citep{birky:PNAS:1988,fearnhead:JAP:2002,taylor:EJP:2007}. Comparisons between neutrally-evolving populations and their mutation processes have also been used to show that evolution favors types that are robust against mutation \citep{van_nimwegen:PNAS:1999}. Given the coupling of birth and mutation, how neutral evolution affects trait frequencies (allele frequencies, in the sexually-reproducing case) is a natural question to ask. Under mild assumptions, we have seen that neutral evolution results in the same trait frequencies as the mutation process.

This method can also be used to characterize convergence to the stationary distribution along the lineages. Since each lineage can contain at most one birth event per update step, the lineages mix at least as slowly as does the original mutation process. In general, if $\Psi_{i}\left(T\right)$ is the distance between the stationary distribution and the lineage leading to individual $i$, then $\Psi_{i}\left(T\right)\leqslant\mathbf{E}\left[\psi\left(B_{i}^{T}\right)\right]$, where $\psi\left(t\right)$ is the distance between the mutation process and the stationary distribution after $t$ steps and $B_{i}^{T}$ is a random variable giving the number of birth events along the lineage leading to $i$ at time $T$. The distribution of $B_{i}^{T}$ can be given explicitly in terms of the demographic variables of the update rule (Eq.~\ref{eq:generalBi}), which gives a bound on $\Psi_{i}\left(T\right)$ that is straightforward to calculate (although the expression itself is not necessarily succinct).

Matrix games, as well as the regeneration process, have finitely many possible types, but the main result on long-term trait frequency holds for mutation processes with continuous state spaces as well. A genetic type of this form could be an element of an interval such as $\left[0,1\right]$, for example, representing partial expression of a trait or the tendency to be of a particular binary type. Mutations can also be supported on uncountably infinitely many points in the trait space, one example being when an individual with type $x\in\left[0,1\right]$ mutates to a nearby type with probability determined by a Gaussian distribution centered at $x$ \citep{doebeli:S:2004,wakano:G:2012}. On a generic state space, the technical condition we require is that the mutation process be an ergodic Harris chain \citep{durrett:CUP:2009}.

Our focus here has been on evolving populations of fixed size and structure. These assumptions are not strictly necessary, but they do make the notation more convenient since an evolutionary process can then be described by a probability distribution over simple replacement events \citep{allen:JMB:2014}. Extinction also becomes a possibility when the population size fluctuates, which can make analyzing such a process more complicated \citep{haccou:CUP:2005,lambert:TPB:2006,faure:AAP:2014,hamza:JMB:2015,huang:PNAS:2015,mcavoy:TPB:2018}. Nonetheless, the intuition from our analysis here carries over to other situations: as long as an individual arises on a lineage with sufficiently many prior birth events, the mutation process along this lineage will converge to its stationary distribution, $\pi$. If all individuals in the population have this property sufficiently far into the future, then $\pi$ must give the long-term average trait frequencies in the population.

The study of trait dynamics along lineages leads to an interesting question: how does selection change the stationary trait distribution of individual $i$? This question, of course, is not new and has been studied in constant-fitness models over the past two decades via the so-called ``ancestral selection graph" \citep{krone:TPB:1997,neuhauser:TPB:1999,slade:G:2005}, which is a version of Kingman's coalescent \citep{kingman:JAP:1982} that allows for selective differences between the types. In populations with heterogeneous structure, selection can affect the marginal trait distributions in different ways at different locations, and there are still many open questions in this area, particularly with respect to frequency-dependent selection (games). For example, in an evolutionary game, the standard method of measuring the influence of selection is through its effects on average trait frequency in the population \citep{nowak:PTRSB:2009}. However, this type of averaging over the population does not provide a description of the locations at which a given trait is more likely to be found in a population, and it says nothing about selection's effects on mixing times.

Here, we have studied the individual lineages generated by neutral evolution. The trait distributions along those lineages are given by that of the mutation process. Therefore, the trait frequencies do not depend on population size, structure, and update rule. The mixing times of neutral evolution, however, do depend on these demographic components. Selection can perturb neutral evolutionary dynamics in complicated ways, and how it changes both marginal distributions and mixing times, especially within the realm of evolutionary game theory, is a more complicated question but nonetheless a natural extension of the one considered here.

\section*{Acknowledgments}
The authors are grateful to Christoph Hauert and John Wakeley for helpful discussions and to the referees for their comments on earlier drafts. A. M. and M. A. N. are supported by the Office of Naval Research, grant N00014-16-1-2914. B. Allen is supported by the National Science Foundation, award 1715315.

\end{document}